\newtheorem{theorem}{Theorem}[section]
\newcommand{\SNR}{\mathsf{SNR}}
\newcommand{\eqFunc}{\overset{\mathrm{f}}{=}}
\newcommand{\C}{\mcal{C}}
\newcommand{\G}{\mathrm{G}}
\newcommand{\uplink}{\mathrm{ul}}
\newcommand{\downlink}{\mathrm{dl}}
\newcommand{\Out}{\mathrm{out}}
\newcommand{\Inn}{\mathrm{in}}
\newcommand{\lp}{\left(}
\newcommand{\rp}{\right)}
\newcommand{\lbp}{\left\{}
\newcommand{\rbp}{\right\}}
\newcommand{\lba}{\left\lvert}
\newcommand{\rba}{\right\rvert}
\newcommand{\mcal}{\mathcal}
\newcommand{\mscr}{\mathscr}
\newcommand{\mb}{\mathbf}
\newcommand{\mbb}{\mathbb}
\newcommand{\msf}{\mathsf}
\title{On Two-Pair Two-Way Relay Channel with an Intermittently Available Relay}
\author{
\authorblockN{Shih-Chun Lin}
\authorblockA{
Department of ECE, NTUST\\
Taipei, Taiwan\\
\textsf{sclin@mail.ntust.edu.tw}} \and
\authorblockN{I-Hsiang Wang}
\authorblockA{
Department of EE, NTU\\
Taipei, Taiwan\\
\textsf{ihwang@ntu.edu.tw}}
\thanks{
The work of S.-C. Lin was supported by Ministry of Education,
Taiwan under grants "Aiming For the Top University Program" and
Ministry of Science and Technology, Taiwan, under Grants MOST
101-2221-E-011-170-MY3. The work of I.-H. Wang was supported by
Ministry of Science and Technology, Taiwan, under Grants MOST
103-2221-E-002-089-MY2 and MOST 103-2622-E-002-034.} }
\begin{document}
\maketitle
\begin{abstract}
When multiple users share the same resource for physical layer cooperation such as relay terminals in their vicinities, this shared resource may not be always available for every user, and it is critical for transmitting terminals to know whether other users have access to that common resource in order to better utilize it.
Failing to learn this critical piece of information may cause severe issues in the design of such cooperative systems.
In this paper, we address this problem by investigating a two-pair two-way relay channel with an intermittently available relay. In the model, each pair of users need to exchange their messages within their own pair via the shared relay. The shared relay, however, is only intermittently available for the users to access. The accessing activities of different pairs of users are governed by independent Bernoulli random processes.
Our main contribution is the characterization of the capacity region to within a bounded gap in a symmetric setting, for both delayed and instantaneous state information at transmitters. An interesting observation is that the bottleneck for information flow is the quality of state information (delayed or instantaneous) available at the relay, not those at the end users. To the best of our knowledge, our work is the first result regarding how the shared intermittent relay should cooperate with multiple pairs of users in such a two-way cooperative network.
\end{abstract}

\section{Introduction}
Physical layer cooperation has been proposed as a promising approach to increase spectral efficiency, where additional resources are dedicated for cooperation, such as relay terminals in the vicinity. Such resources for cooperation could be shared by many different users.
One of the envisioned scenarios for physical layer cooperation is
\emph{multi-pair two-way communication via a relay}, where multiple pairs
of users exchange their messages within their own pairs, with the
help of a relay. The shared resource for cooperation in this scenario is the relay shared by multiple pairs of users.
The simplest information theoretic model for studying this problem is the \emph{two-way relay channel} without user-to-user connections. There has been a great deal of works focusing on (multi-pair) two-way relay channels, such as \cite{ChungTwoWay,SezginAvestimehr_12}. A conventional assumption in these works is that, the relay is always available for the users to access, so that they can exchange data via the relay all the time.



In practice, however, the opportunity of cooperation may not always exist, mainly because the management and allocation of resources for cooperation (such as relay terminals in their vicinities) lies beyond the physical layer. When multiple users share the same cooperation resource,
it may severely impact the design of such cooperative systems if transmitters cannot timely learn how heavily the common resource is currently being utiliized.
In the context of multi-pair two-way communication,
the issue becomes relevant especially when the spectral activity such
as the frequency hopping sequence and/or the frequency coding
pattern of a communication link is unknown to a relay which is
installed by a third party {\cite{METIS}} but shared by
multiple pairs of users. Hence, it is of fundamental interest to characterize the capacity of such systems, under various levels of state information availability of other pairs' accessing activities.

In this paper, we take a first step towards this direction by
investigating a two-pair two-way relay channel where the two
pairs get to access the relay intermittently, under various settings of temporal availability of \emph{activity state information} at transmitters.
The availability of
accessing the relay is governed by two independent Bernoulli $p$
i.i.d. processes, one for each pair. The terminals can either have
delayed information about the activity states, or instantaneous
state information.
See Figure~\ref{fig:Model} for an illustration of the channel model.

\begin{figure}[htbp]
\centering
\includegraphics[width = 0.65\linewidth]{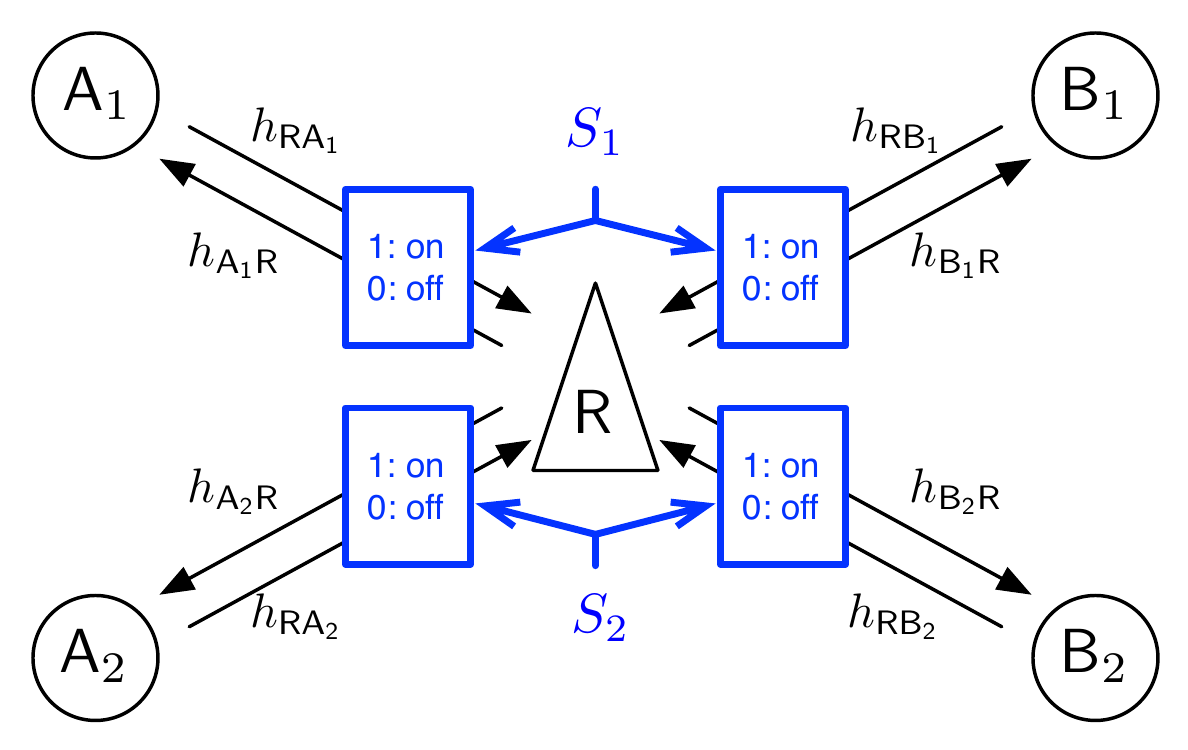}
\caption{Two-Pair Two-Way Relay Channel with an Intermittent Relay} \label{fig:Model}
\end{figure}

Our main contribution is the characterization of the capacity
region to within a bounded gap in a symmetric setup, both under
the delayed state information setting and the instantaneous state
information setting. We show that the two-pair two-way relay channel can be decomposed
into the uplink and the downlink part, and the approximate
capacity region is characterized as the intersection of the uplink
outer bound region and the downlink outer bound region. The
decomposition principle can be viewed as an extension of {that in}
the multi-pair two-way relay channel with a static relay
\cite{SezginAvestimehr_12}. An interesting observation is hence
that the bottleneck for information flow within the system is the
quality of state information available at the relay. Towards establishing the achievability of the bound-gap result,
for the downlink phase with delayed state information, we have developed a novel scheme that takes care of unequal received signal-to-noise ratios. The scheme complements that in \cite{MaddahAliAllerton13} where equal received SNRs are assumed.

We obtain key insights from the binary expansion model \cite{AvestimehrDiggavi_11} for this problem to develop our scheme, where the main novelty is two-fold. First, since the state is not known instantaneously at the relay (transmitter), a lattice-based dirty paper coding (DPC) is employed instead of conventional DPC based on Gaussian random codes. Second, to take care of the unequal received SNRs, instead of quantizing the erased sequences into a single codeword like \cite{MaddahAliAllerton13}, we propose a successive refinement framework so that stronger receiver can have higher resolution into the quantized signal.

{\bf Related work}:
Two-way relay channel with a static relay has been extensively studied.
For the single-pair two-way relay channel, \cite{ChungTwoWay} characterized the capacity region to within $\frac{1}{2}$ bit with compute-and-forward \cite{Nazer} and cut-set based outer bound.
\cite{SezginAvestimehr_12} extended the result to the two-pair two-way relay channel, using insights from the binary-expansion model \cite{AvestimehrDiggavi_11}. However, when the relay is
intermittently available, there has been very few results
regarding how the shared relay should cooperate with multiple
pairs of users. Related works that address intermittence in wireless networks were focused on bursty interference networks.
\cite{WangISIT13} characterized the generalized degrees of freedom of a bursty interference channel with delayed state information and channel output feedback, while \cite{MaddahAliTIT15} \cite{JafarISIT13} studied the degrees of freedom of binary fading interference channels with instantaneous or delayed
state information. However, the intermittent availability of cooperation resources have not been investigated widely.


\section{Problem Formulation}\label{sec_Formulation}
\subsection{Channel Model}
In the system, there are two pairs of \emph{end user} terminals, pair 1: $\lp
\msf{A}_1,\msf{B}_1\rp$ and pair 2: $\lp \msf{A}_2,\msf{B}_2\rp$,
and one \emph{relay} terminal $\msf{R}$. Each terminal can listen and
transmit simultaneously, and the blocklength is $N$. End user
$\msf{U}_i$ in pair $i$ ($\msf{U}=\msf{A},\msf{B}$, $i=1,2$) would
like to deliver its message $W_{\msf{U}_i}$ to the other end user in pair $i$. The encoding constraints depend on the state
information assumption and are detailed in
Section~\ref{subsec_SI}.

The two-pair two-way Gaussian relay channel with an intermittent
relay is depicted in Figure~\ref{fig:Model} and defined as
follows. The transmitted signals of the five terminals are
$X_{\msf{A}_1}, X_{\msf{B}_1}, X_{\msf{A}_2}, X_{\msf{B}_2},
X_{\msf{R}} \in \mbb{C}$ respectively, each of which is subject to
unit power constraint, and the received signals are
\begin{align}
Y_{\msf{A}_i}[t] &= h_{\msf{A}_i\msf{R}} S_i[t] X_{\msf{R}}[t] + Z_{\msf{A}_i}[t],\ i=1,2, \notag \\
Y_{\msf{B}_i}[t] &= h_{\msf{B}_i\msf{R}} S_i[t] X_{\msf{R}}[t] + Z_{\msf{B}_i}[t],\ i=1,2, \label{eq_Gau_DL_channel}\\
Y_{\msf{R}}[t] &= \sum_{i=1,2}h_{\msf{R}\msf{A}_i} S_i[t]
X_{\msf{A}_i}[t] \!+\! h_{\msf{R}\msf{B}_i} S_i[t]X_{\msf{B}_i}[t]
\!+\! Z_{\msf{R}}[t], \label{eq_Gau_UP_channel}
\end{align}
where the independent additive noises at the five terminals
$Z_{\msf{A}_1}[t], Z_{\msf{B}_1}[t], Z_{\msf{A}_2}[t],
Z_{\msf{B}_2}[t], Z_{\msf{R}}[t]$ are $\mcal{CN}\lp 0,1\rp$ i.i.d.
over time. $\lbp S_i[t]\rbp$ denotes the random process that
governs the accessing activity of the two users in pair $i$, for
$i=1,2$. $\lbp S_1[t]\rbp$ and $\lbp S_2[t]\rbp$ are independent
Bernoulli $p$ processes, i.i.d. over time \footnote{In general,
the states may be correlated across time and thus allowing us to
predict future and improve the throughput. However, discussing the
benefit of predicting the future is beyond the scope of this
paper, and thus as \cite{WangISIT13}\cite{MaddahAliTIT15}, we
impose i.i.d assumptions on states.}. We denote the
signal-to-noise ratios as follows: for $i=1,2$,
\begin{align*}
\SNR_{\msf{R}\msf{A}_i} &:= \lba h_{\msf{R}\msf{A}_i}\rba^2 & \SNR_{\msf{R}\msf{B}_i} &:= \lba h_{\msf{R}\msf{B}_i}\rba^2\\
\SNR_{\msf{A}_i\msf{R}} &:= \lba h_{\msf{A}_i\msf{R}}\rba^2 &
\SNR_{\msf{B}_i\msf{R}} &:= \lba h_{\msf{B}_i\msf{R}}\rba^2
\end{align*}

{Note that we focus the fast fading scenario where a codeword can
span over different activity states. This assumption makes our
uplink model \eqref{eq_Gau_UP_channel} fundamentally different to
the random access channel in \cite{MineroTIT12}. In
\cite{MineroTIT12}, the slow fading scenario was studied where
encoding over different states was prohibited.}

\subsection{Activity State Information}\label{subsec_SI}
We consider two scenarios in this paper regarding how the
accessing activity state processes $\{S_1[t]\}$ and $\{ S_2[t]\}$
are known to the five terminals, in terms of how the state
information helps in encoding.
\subsubsection{Delayed State Information}
\begin{itemize}
\item For end users: for user $\msf{U}_i$ in pair $i$
($\msf{U}=\msf{A},\msf{B}$, $i=1,2$), $X_{\msf{U}_i}[t] \eqFunc
\lp W_{\msf{U}_i}, Y_{\msf{U}_i}^{t-1}, S_1^{t-1}, S_2^{t-1}\rp$.
\item For the relay: $X_{\msf{R}}[t]
\eqFunc \lp Y_{\msf{R}}^{t-1}, S_1^{t-1}, S_2^{t-1}\rp$.
\end{itemize}

\subsubsection{Instantaneous State Information}
\begin{itemize}
\item For end users: for user $\msf{U}_i$ in pair $i$
($\msf{U}=\msf{A},\msf{B}$, $i=1,2$), $X_{\msf{U}_i}[t] \eqFunc
\lp W_{\msf{U}_i}, Y_{\msf{U}_i}^{t-1}, S_1^{t}, S_2^{t}\rp$.
\item For the relay: $X_{\msf{R}}[t]
\eqFunc \lp Y_{\msf{R}}^{t-1}, S_1^{t}, S_2^{t}\rp$.
\end{itemize}

The capacity region $\mscr{C}$ depends on the available activity
state information. We take the following notation to denote the
capacity region under certain setting of activity state
information: $\mscr{C}\lp \mathrm{u, r}\rp$, where the first
argument $\mathrm{u \in \lbp d,i\rbp}$ denotes that the end users have delayed state information ($\mathrm{d}$) or
instantaneous state information ($\mathrm{i}$), while the second
argument $\mathrm{r \in \lbp d,i\rbp}$ denotes the type of the available activity state information at the relay terminal.

\section{Main Results}\label{sec_Result}
In this paper, we focus on the symmetric case where
$\SNR_{\msf{R}\msf{U}_i} = \SNR_{\msf{R}i}$,
$\SNR_{\msf{U}_i\msf{R}} = \SNR_{i\msf{R}}$,
for $\msf{U}=\msf{A},\msf{B}$ and $i=1,2$. We focus on characterizing the symmetric rate tuple $\lp R_1, R_2\rp$, where
$R_{\msf{A}_i} = R_{\msf{B}_i} = R_i$ for $i=1,2$.
Without loss of generality, we assume that $\SNR_{1\msf{R}} \geq \SNR_{2\msf{R}}$.

To present our main result, let us begin with some definitions useful in characterizing the approximate capacity regions.
\par
{\it Notations}:
\begin{itemize}
\item
Define $\C(x):=\log(1+x)$ (logarithm is of base 2).
\item For a $\mscr{R} \subseteq \mbb{R}^2$, define the pointwise minus operator $\ominus$ as follows: $\mscr{R} \ominus (a,b) := \lbp \lp x-a,y-b\rp : \lp x,y\rp\in\mscr{R}\rbp$.
\end{itemize}
\par\smallskip
{\noindent \bf Uplink Rate Regions}:
Let $\mscr{R}^{\uplink}_{ \Out}\lp
\mathrm{d}\rp$ be the collection of $\lp R_1,R_2\rp \ge 0$
satisfying
\begin{align}
\textstyle\frac{R_{1}}{p} &\leq \C\lp \SNR_{\msf{R}1}\rp,\quad
\textstyle\frac{R_{2}}{p} \leq \C\lp \SNR_{\msf{R}2}\rp, \label{eq_up_delayedub_C1} \\
\textstyle\frac{R_{1}}{p}+\frac{R_{2}}{p} &\leq
(1-p)\lp \C\lp\SNR_{\msf{R}1}\rp+ \C\lp\SNR_{\msf{R}2}\rp\rp \notag \\
&\hspace{-36pt} + p\,\C\lp\SNR_{\msf{R}1}+\SNR_{\msf{R}2}+2\sqrt{\SNR_{\msf{R}1}\SNR_{\msf{R}2}}\rp.
\label{eq_up_delayedub_C3}
\end{align}
Let $\mscr{R}^{\uplink}_{ \Inn}\lp \mathrm{d}\rp :=
\mscr{R}^{\uplink}_{ \Out}\lp \mathrm{d}\rp \ominus \lp 1,1\rp$.
Let $\mscr{R}^{\uplink}_{ \Out}\lp \mathrm{i}\rp$ be the
collection of $\lp R_1,R_2\rp \ge 0$ satisfying
\eqref{eq_up_delayedub_C1} -- \eqref{eq_up_delayedub_C3} with
$\SNR$'s replaced by $\frac{\SNR}{p}$, and $\mscr{R}^{\uplink}_{
\Inn}\lp \mathrm{i}\rp$ be $\mscr{R}^{\uplink}_{ \Inn}\lp
\mathrm{d}\rp$  with $\SNR$'s replaced by $\frac{\SNR}{p}$.

\par\smallskip
{\noindent \bf Downlink Rate Regions}:
Let $\mscr{R}^{\downlink}_{ \Out}\lp
\mathrm{d}\rp$ be the collection of $\lp R_1,R_2\rp \ge 0$
satisfying
\begin{align}
\textstyle\frac{R_{2}}{p} &\leq \C\lp \SNR_{2\msf{R}}\rp, \label{eq_AWGN_BC_UB_R2_1} \\
\textstyle\frac{R_{1}}{p}+\frac{R_{2}}{p(2-p)} & \leq \C\lp \SNR_{1\msf{R}}\rp, \label{eq_AWGN_BC_UB_R1}\\
\textstyle\frac{R_{1}}{p(2-p)}+\frac{R_{2}}{p} &\leq \textstyle\frac{\C\lp \SNR_{1\msf{R}}\rp - \C\lp \SNR_{2\msf{R}}\rp}{2-p} + \C\lp \SNR_{2\msf{R}}\rp \label{eq_AWGN_BC_UB_R2}.
\end{align}
Let $\mscr{R}^{\downlink}_{ \Inn}\lp \mathrm{d}\rp :=
\mscr{R}^{\downlink}_{ \Out}\lp \mathrm{d}\rp \ominus \lp \Delta_1,\Delta_2\rp$, where
\begin{align}
\Delta_1&= \textstyle\frac{p(1-p)}{3-p}\log3+\frac{p}{3-p}\log\frac{2\pi e}{12}, \label{eq_Distortion1}\\
\Delta_2&= \textstyle\max\left\{p, \frac{p(1-p)}{3-p}\log10+\frac{p}{3-p}\right\}. \label{eq_Distortion2}
\end{align}
Let
$\mscr{R}^{\downlink}_{ \Out}\lp \mathrm{i}\rp = \mscr{R}^{\downlink}_{ \Inn}\lp \mathrm{i}\rp$ be the
collection of $\lp R_1,R_2\rp \ge 0$ with
\begin{align*}
\textstyle\frac{R_{1}}{p} &\textstyle \leq \C\lp
\frac{\SNR_{1\msf{R}}}{p(2-p)}\rp,\quad
\textstyle\frac{R_{2}}{p} \textstyle \leq \C\lp \frac{\SNR_{2\msf{R}}}{p(2-p)}\rp,\\
\textstyle\frac{R_{1}}{p}+\frac{R_{2}}{p} &\leq \textstyle(1-p)\lp
\C\lp \frac{\SNR_{1\msf{R}}}{p(2-p)}\rp+ \C\lp
\frac{\SNR_{2\msf{R}}}{p(2-p)}\rp\rp
\\ &\quad \textstyle + p\, \C\lp
\frac{\SNR_{1\msf{R}}+\SNR_{2\msf{R}}}{p(2-p)}\rp.
\end{align*}

Before we proceed, we provide some numerical evaluations to
illustrate various regions defined above. We set the on/off
probability of activity state $p=0.6$. In Figure \ref{fig:Gup}, we
show $\mscr{R}^{\uplink}_{\Inn}\lp\mathrm{d}\rp$ and
$\mscr{R}^{\uplink}_{\Inn}\lp\mathrm{i}\rp$ with
$\SNR_{\msf{R}1}=30+20\log1.5$ dB and $\SNR_{\msf{R}2}=30$ dB. In
Figure \ref{fig:GDL}, we show
$\mscr{R}^{\downlink}_{\Inn}\lp\mathrm{d}\rp$ and
$\mscr{R}^{\downlink}_{\Inn}\lp\mathrm{i}\rp$ with
$\SNR_{1\msf{R}}=30+20\log1.5$ dB and $\SNR_{2\msf{R}}=30$ dB.

{\bf Remark}: Note that
$\mscr{R}^{\downlink}_{\Inn}\lp\mathrm{d}\rp$ is the smallest
among four regions in Figure \ref{fig:Gup} and \ref{fig:GDL}. If
the relay has only delayed state information, the downlink from
relay becomes the bottleneck for information flow within the
system.

\begin{figure}[htbp]
\centering{ \subfigure[]{\includegraphics [width
=0.6\linewidth]{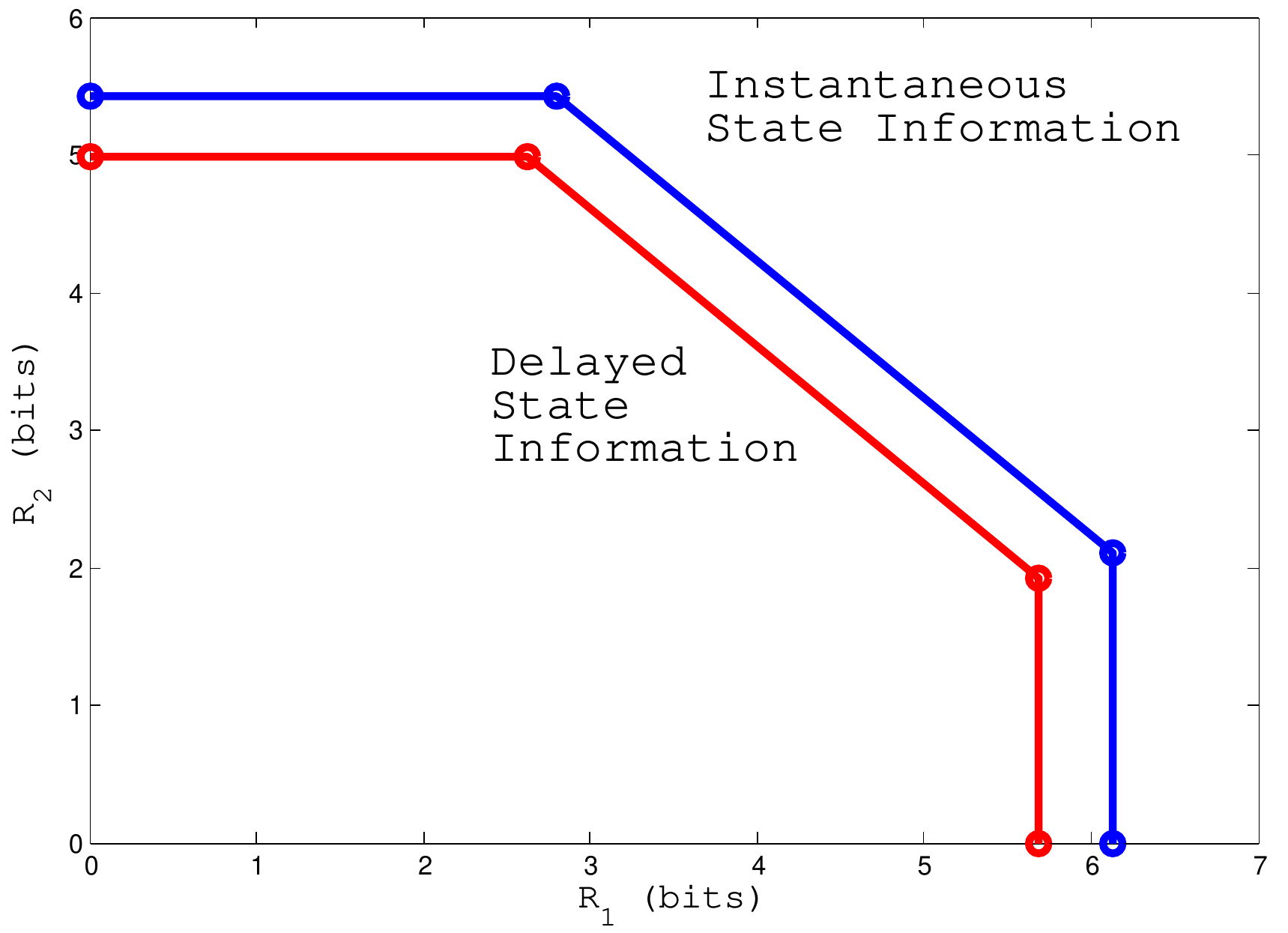}\label{fig:Gup}}
\subfigure[]{\includegraphics [width
=0.6\linewidth]{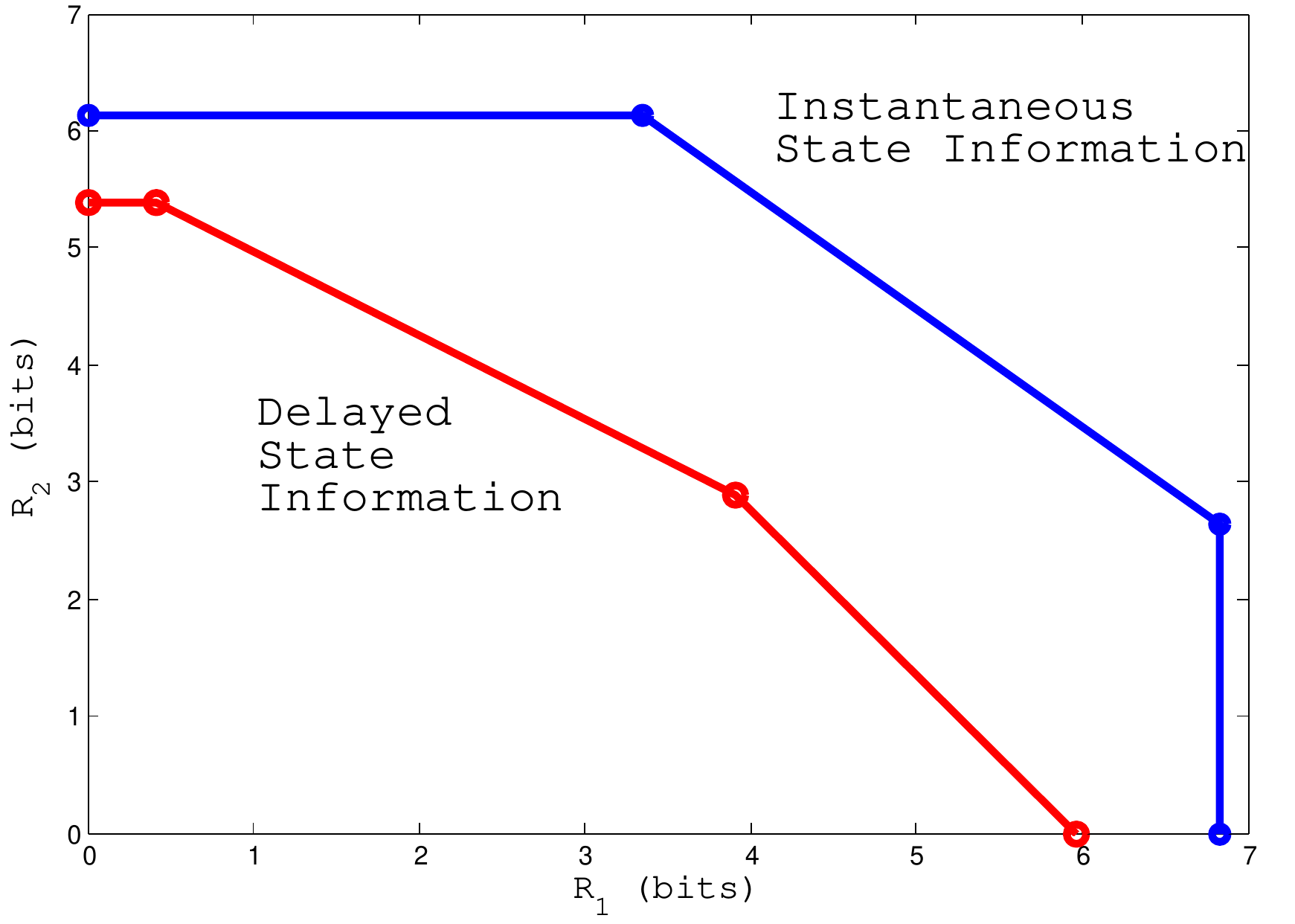}\label{fig:GDL}}} \caption{Bounded-gap
uplink (a) and downlink (b) inner bound regions with delayed and
instantaneous activity state information.}
\end{figure}

%

Our main result is summarized in the following theorem.
\begin{theorem}[Capacity Region to within a Bounded Gap]\label{Theo_Main}
For capacity region $\mscr{C}\lp\mathrm{u,r}\rp$, we have inner
and outer bounds
\begin{align}
\mscr{C}\lp\mathrm{u,r}\rp &\supseteq
\mscr{R}^{\uplink}_{\Inn}\lp\mathrm{u}\rp \cap \mscr{R}^{\downlink}_{\Inn}\lp\mathrm{r}\rp,\ \forall\, \lp\mathrm{u,r}\rp \in \lbp \mathrm{d,i}\rbp^2, \label{eq_Gau_CapIn} \\
\mscr{C}\lp\mathrm{u,r}\rp &\subseteq
\mscr{R}^{\uplink}_{\Out}\lp\mathrm{u}\rp \cap
\mscr{R}^{\downlink}_{\Out}\lp\mathrm{r}\rp,\ \!\!\forall\,
\lp\mathrm{u,r}\rp \in \lbp \mathrm{d,i}\rbp^2\!. \!
\label{eq_Gau_CapOuter}
\end{align}
Since for all $\lp\mathrm{u,r}\rp \in \lbp \mathrm{d,i}\rbp^2$, $\mscr{R}^{\uplink}_{\Out}\lp\mathrm{u}\rp$ and $\mscr{R}^{\uplink}_{\Inn}\lp\mathrm{u}\rp$ are within a bounded gap, and so are $\mscr{R}^{\downlink}_{\Out}\lp\mathrm{r}\rp$ and $\mscr{R}^{\downlink}_{\Inn}\lp\mathrm{r}\rp$, we have characterized the capacity region to within a bounded gap.
\end{theorem}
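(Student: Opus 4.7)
My plan is to establish the theorem through the uplink/downlink decomposition principle, proving the outer bound \eqref{eq_Gau_CapOuter} via cut-set style arguments and the inner bound \eqref{eq_Gau_CapIn} by constructing a two-phase coding scheme. The idea is to view the intermittent two-pair two-way relay channel as the concatenation of an \emph{uplink} multiple-access phase from the four end users to the relay $\msf{R}$ governed by the erasure processes $\lbp S_1[t]\rbp,\lbp S_2[t]\rbp$, and a \emph{downlink} broadcast phase from $\msf{R}$ to the four end users governed by the same processes. The symmetry of the rate requirement $R_{\msf{A}_i}=R_{\msf{B}_i}=R_i$ together with the two-way structure then lets the pair exchange succeed whenever both the uplink and the downlink rate constraints are met, so the capacity region lies in the intersection of the two approximate regions.

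For the outer bound on $\mscr{C}\lp\mathrm{u,r}\rp\subseteq\mscr{R}^{\uplink}_{\Out}\lp\mathrm{u}\rp$, I would merge the two end users of each pair into a super node that already knows both $W_{\msf{A}_i}$ and $W_{\msf{B}_i}$ and suppress the downlink, reducing the model to a two-user state-dependent MAC with Bernoulli$\lp p\rp$ erasures. The individual bounds \eqref{eq_up_delayedub_C1} then follow from the point-to-point capacity of an erased AWGN link, while the sum-rate bound \eqref{eq_up_delayedub_C3} follows from a Fano-inequality decomposition conditioned on $\lp S_1[t],S_2[t]\rp$: in slots where only one state is $1$ the users are limited to individual rates, and in slots where both are $1$ they achieve at most the coherent MAC sum rate $\C\lp\SNR_{\msf{R}1}+\SNR_{\msf{R}2}+2\sqrt{\SNR_{\msf{R}1}\SNR_{\msf{R}2}}\rp$; for $\mathrm{u}=\mathrm{i}$ the transmitters may concentrate power into active slots, which inflates each $\SNR$ by $\frac{1}{p}$. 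For $\mscr{C}\lp\mathrm{u,r}\rp\subseteq\mscr{R}^{\downlink}_{\Out}\lp\mathrm{r}\rp$ I would provide a genie giving the relay both messages, reducing to an erased degraded broadcast channel; for $\mathrm{r}=\mathrm{d}$ the $2-p$ factor in \eqref{eq_AWGN_BC_UB_R1}--\eqref{eq_AWGN_BC_UB_R2} arises from the delayed-CSIT BC converse of Maddah-Ali--Tse type applied per state realization, whereas for $\mathrm{r}=\mathrm{i}$ state-aware power splitting yields the superposition-type bound with $\SNR$'s scaled by $\frac{1}{p\lp 2-p\rp}$.

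For the inner bound, I would build uplink and downlink schemes separately and compose them in a block-Markov fashion. The uplink achievability uses compute-and-forward with nested lattice codes as in \cite{ChungTwoWay,SezginAvestimehr_12}: within pair $i$ the relay decodes $W_{\msf{A}_i}\oplus W_{\msf{B}_i}$ in active slots, and across pairs it either treats the other pair as noise or jointly decodes, with the intermittence only reshaping the effective SNR through $p$ and, in the $\mathrm{u}=\mathrm{i}$ case, an additional $\frac{1}{p}$ power-boost factor; the $\lp 1,1\rp$ pointwise gap absorbs the usual lattice and quantization losses. The downlink scheme is the technical core. For $\mathrm{r}=\mathrm{i}$ the relay schedules superposition coding according to $\lp S_1[t],S_2[t]\rp$, which achieves $\mscr{R}^{\downlink}_{\Inn}\lp\mathrm{i}\rp$ exactly. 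For $\mathrm{r}=\mathrm{d}$ I would use the two novel ingredients advertised in the introduction: a lattice-based dirty-paper code that retroactively cancels the interference the relay unknowingly injected in slots whose state it only learns next, and a successive-refinement quantizer so that the stronger receiver $\lp\SNR_{1\msf{R}}\rp$ resolves the erasure-induced side information at finer resolution than the weaker receiver $\lp\SNR_{2\msf{R}}\rp$, thereby extending \cite{MaddahAliAllerton13} beyond equal SNRs.

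The main obstacle is the construction and tight analysis of this delayed-state unequal-SNR downlink scheme, because the rate split between refinement layers together with the lattice and dirty-paper distortions must collapse into the explicit constants $\Delta_1,\Delta_2$ in \eqref{eq_Distortion1}--\eqref{eq_Distortion2}. My strategy is to first derive the bit-level scheme in the deterministic binary-expansion model of \cite{AvestimehrDiggavi_11}, which transparently exposes the successive-refinement layering and the role of the hidden state bits, and then lift it to the Gaussian model; the $\log 3$ and $\log\frac{2\pi e}{12}$ summands in $\Delta_1$ then arise from the lattice-quantization shaping loss and the nested-lattice dirty-paper penalty respectively, and the $\max$ in $\Delta_2$ records the weaker receiver's refinement-layer loss. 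Once both phases are in hand, the bounded-gap conclusion is immediate: $\mscr{R}^{\uplink}_{\Out}\lp\mathrm{u}\rp$ and $\mscr{R}^{\uplink}_{\Inn}\lp\mathrm{u}\rp$ differ only by the $\SNR$-independent shift $\lp 1,1\rp$, and likewise $\mscr{R}^{\downlink}_{\Out}\lp\mathrm{r}\rp$ and $\mscr{R}^{\downlink}_{\Inn}\lp\mathrm{r}\rp$ differ only by $\lp\Delta_1,\Delta_2\rp$, so their intersections sandwich $\mscr{C}\lp\mathrm{u,r}\rp$ within a constant gap as claimed.
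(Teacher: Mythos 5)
Your plan follows essentially the same route as the paper: the uplink/downlink decomposition with the relay computing the XORs $W_{\msf{A}_i}\oplus W_{\msf{B}_i}$ via lattice-based compute-and-forward, the cut-set/degraded-broadcast outer bounds, and the three-phase delayed-CSIT downlink built from the binary-expansion insight with successive refinement and a modulo-lattice dirty-paper step, so the bounded-gap conclusion follows exactly as you describe. The only small misattribution is the $\max$ in $\Delta_2$, which in the paper comes from taking the worse of the two corner-point losses (plain superposition coding at one corner versus the three-phase scheme at the other) rather than from the refinement layer, but this does not affect the argument.
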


\begin{proof}
Regarding the proof of the converse, we employ cut-set based outer
bounds and enhance the downlink channel to a degraded broadcast
channel where feedback does not increase the capacity region
\cite{GeorgiadisDetDelayedBC,BooK_NIT_KIM}. Details can be found
in the Appendix.

Regarding the achievability, here we provide the scheme for the inner bound of
$\mscr{C}\lp\mathrm{d,d} \rp$ in \eqref{eq_Gau_CapIn}, the case
where end users and relay all have delayed state information. The proofs for the other three
combinations in Theorem \ref{Theo_Main} easily follow, and are
also provided in the Appendix.

Our scheme consists of two phases: the uplink phase and the downlink
phase. In the uplink phase, the relay terminal aims to decode the
two XORs of the two pairs of messages $\Sigma_i=W_{\msf{A}_i}
\oplus W_{\msf{B}_i},\;i=1,2$ from its received signal, and store
them for later uses. Hence, it can be viewed as a function
computation problem over a multiple access channel. In the
downlink phase, the relay terminal re-encodes the stored XORs
$\lbp \Sigma_1,\Sigma_2\rbp$ and delivers $\Sigma_i$ to end users
$\lbp \msf{A}_i,\msf{B}_i\rbp$ for $i=1,2$. The end user terminals
decode their desired messages from the XORs by using its self
message as side information. Hence, it can be viewed as a
broadcast channel with two independent messages $\lbp
\Sigma_1,\Sigma_2\rbp$ and four receivers
$\lbp\msf{A}_1,\msf{B}_1,\msf{A}_2,\msf{B}_2\rbp$, where $\lbp
\msf{A}_i, \msf{B}_i\rbp$ aim to decode $\Sigma_i$, for $i=1,2$.

Further note that in the symmetric setting, since the rate of the
messages $W_{\msf{A}_i}$ and $W_{\msf{B}_i}$ are both $R_i$, the
rate of the XOR $\Sigma_i$ is also $R_i$, for $i=1,2$. Hence, we
are able to establish the inner bound region of achievable $\lp
R_1, R_2\rp$ as the intersection of the inner bound region of the
uplink phase and that of the downlink phase, denoted by
$\mscr{R}^{\uplink}_{\Inn}$ and $\mscr{R}^{\downlink}_{\Inn}$
respectively. Below we give the proof sketches for the uplink
phase in Sec. \ref{sec:PfInner} and the downlink phase in Sec.
\ref{sec:PfInnerdL}. The detailed proofs 
are given in the Appendix.
\end{proof}

\section{Proof Sketch of the Inner Bound $\mscr{R}^{\uplink}_{\Inn}\lp\mathrm{d}\rp$  in
\eqref{eq_Gau_CapIn} for Uplink with Delayed State
Information}\label{sec:PfInner}

To achieve $\mscr{R}^{\uplink}_{\Inn}\lp\mathrm{d}\rp$ in the
uplink phase, we will use lattice-based compute-and-forward
\cite{Nazer}. Casting it as a function computation problem over
the four-transmitter multiple access channel, the relay can
successfully decode the XORs of messages $ \Sigma_i=W_{\msf{A}_i}
\oplus W_{\msf{B}_i},\;i=1,2$ from its received signal
\eqref{eq_Gau_UP_channel} without explicitly decoding the four
messages $\lbp W_{\msf{A}_1}, W_{\msf{B}_1}, W_{\msf{A}_2},
W_{\msf{B}_2}\rbp$, thanks to the linearity of lattice codes.
Compared with the scheme in \cite{SezginAvestimehr_12}, our scheme
needs to deal with the additional ergodic activity states
$\{S_1(t),S_2(t)\}$ and the delayed state information. Also we
adopt joint lattice deocding from \cite{scTWCOM14}, which has
better performance than the successive lattice decoding in
\cite{SezginAvestimehr_12}.

The details of achieving
$\mscr{R}^{\uplink}_{\Inn}\lp\mathrm{d}\rp$ in
\eqref{eq_Gau_CapIn} come as follows. First, we assume that the
channel gains in \eqref{eq_Gau_UP_channel} are real, which is
without loss of generality since we can pre-rotate the phase of
the complex channel before transmission. Then we collect the real
and imaginary parts of the $T$ received symbols at the relay as
\cite{scTWCOM14}, and focus on the following real equivalent
uplink channel from \eqref{eq_Gau_UP_channel} as
\begin{equation}\label{eq_ul_Lattice_model}
    \mathbf{y}_{\msf{R}}=\left[\mathbf{H}_{A_1} \; \mathbf{H}_{A_2}\right]\left[\begin{array}{c}
      \mathbf{x}_{\msf{A}_1}+\mathbf{x}_{\msf{B}_1}\\
      \mathbf{x}_{\msf{A}_2}+\mathbf{x}_{\msf{B}_2}\\
    \end{array}\right]+\mathbf{z}_{\msf
    R},
\end{equation}
where the $2T \times 1$ real vector $\mathbf{y}_{\msf{R}}$ is
formed from $Y_\msf{R}[t]$ as
\[\mathbf{y}_{\msf{R}}=\left[\mathrm{Re}(Y_\msf{R}[1]),\mathrm{Im}(Y_\msf{R}[1]),\ldots,
\mathrm{Re}(Y_\msf{R}[T]),\mathrm{Im}(Y_\msf{R}[T])\right]^T,\]
and $2T \times 1$
$\mathbf{x}_{\msf{A}_1},\mathbf{x}_{\msf{B}_1},\mathbf{x}_{\msf{A}_2},\mathbf{x}_{\msf{B}_2},\mathbf{z}_{\msf
R}$ are similarly formed from
$X_{\msf{A}_1}[t],X_{\msf{B}_1}[t],X_{\msf{A}_2}[t],X_{\msf{B}_2}[t],Z_{\msf
R}[t]$ respectively.  The $2T \times 2T$ diagonal channel matrix
for pair $i$ is
\begin{equation} \label{eq_Gau_up_HA}
\mathbf{H}_{\msf{A}_i}=|h_{\msf{RA}_i}|\cdot\mathrm{diag}(S_i(1),S_i(1),\ldots,S_i(T),S_i(T)).
\end{equation}
The transmitted vector for user $\msf{U}_i$ in pair $i$
($\msf{U}=\msf{A,B},i=1,2$) is
\begin{equation} \label{eq_ul_Lattice}
    \mathbf{x}_{\msf{U}_i}= \left([\mathbf{c_{\msf{U}_i}}-\mathbf{d}_{\msf{U}_i}]\right) \; \mathrm{mod}  \;
    \Lambda_S.
\end{equation}
With  $\Lambda_i$ being the coding lattice
\cite{scTWCOM14}\cite{nested_lattice}, the message $W_{\msf{U}_i}$
is encoded using lattice codeword $\mathbf{c_{\msf{U}_i}} \in
\Lambda_i$, and the shaping lattice $\Lambda_S \subset \Lambda_i$.
As \cite{scTWCOM14}\cite{nested_lattice}, the independent dither
$\mathbf{d}_{\msf{U}_i}$ is uniformly distributed in the Voronoi
region of the shaping lattice $\Lambda_S$, and $ \mathrm{mod}
\;\Lambda_S$ is the modulo-lattice operation. At the relay, it
performs joint lattice decoding for XORs $\Sigma_1$ and $\Sigma_2$
on the following post-processed received signal
\[
\left( \mathbf{Wy}_{\msf{R}}+
[(\mathbf{d}_{\msf{A}_1}+\mathbf{d}_{\msf{B}_1})^T \;\;
(\mathbf{d}_{\msf{A}_2}+\mathbf{d}_{\msf{B}_2})^T]^T \right) \;
\mathrm{mod} \; (\Lambda_S \times \Lambda_S).
\]
From \eqref{eq_ul_Lattice_model}, by choosing
$\mathbf{W}=2\mathbf{H}^T(2\mathbf{H}\mathbf{H}^T+\mathbf{I})^{-1}$
where $\mathbf{H}=\left[\mathbf{H}_{\msf{A}_1} \;
\mathbf{H}_{\msf{A}_2}\right]$, the achievable sum rate
$R_1/p+R_2/p$ has gap $2/p$ to the RHS of
\eqref{eq_up_delayedub_C3}. The other two rate constraints for
$\mscr{R}^{\uplink}_{\Inn}\lp\mathrm{d}\rp$ can be similarly
proved to be achievable.

\section{Proof Sketch of the Inner Bound $\mscr{R}^{\downlink}_{\Inn}\lp\mathrm{d}\rp$  in
\eqref{eq_Gau_CapIn} for Downlink with Delayed State Information}
\label{sec:PfInnerdL}

In our symmetric setting, since $\msf{A}_i$ and $\msf{B}_i$ have
the same receiver SNRs and are under the same activity state $\lbp
S_i[t]\rbp$, for $i=1,2$, we can treat the downlink
as a broadcast channel \eqref{eq_Gau_DL_channel} where the relay
sends $\Sigma_1$ to user $\msf{B}_1$ and $\Sigma_2$ to user
$\msf{B}_2$ respectively, with delayed state information. Compared
with \cite{MaddahAliAllerton13}, which is focused on ergodic
Rayleigh fading downlink with equal received SNRs, our downlink
\eqref{eq_Gau_DL_channel} has different on/off channel statistics
and non-equal $\msf{SNR}_{1\msf{R}} \geq \msf{SNR}_{2\msf{R}}$.
These two differences raise new challenges for obtaining
bounded-gap capacity results.

For the corner point of the outer bound region where
\eqref{eq_AWGN_BC_UB_R2} and \eqref{eq_AWGN_BC_UB_R2_1} intersect,
achieving it to within a bounded gap can be simply done by
Gaussian superposition coding. Thus we focus on the other corner
point where \eqref{eq_AWGN_BC_UB_R1} and \eqref{eq_AWGN_BC_UB_R2}
intersect:
\begin{align}
R_1 &=\textstyle
p\left(\C(\msf{SNR}_{1\msf{R}})\!-\!\C(\msf{SNR}_{2\msf{R}})\right)\!+\!
\frac{p(2-p)}{3-p}\C(\msf{SNR}_{2\msf{R}}), \label{eq_AWGN_BC_UB_point_2_R1} \\
R_2 &=\textstyle\frac{p(2-p)}{3-p}\C(\msf{SNR}_{2\msf{R}}).
\label{eq_AWGN_BC_UB_point_2_R2}
\end{align}
Our scheme to achieve $(R_1-\Delta_1,R_2-\Delta_2)$ with $R_1,
R_2$ taken from
\eqref{eq_AWGN_BC_UB_point_2_R1}\eqref{eq_AWGN_BC_UB_point_2_R2}
is a non-trivial extension of the scheme in
the binary erasure broadcast channel
\cite{GeorgiadisDetDelayedBC}, where $(\Delta_1,\Delta_2)$ are
given in \eqref{eq_Distortion1}\eqref{eq_Distortion2}. To obtain
insights, we start with a binary-expansion model
\cite{AvestimehrDiggavi_11} for this problem as follows.

\vspace{-1mm}
\subsection{Insights from Binary-Expansion Model} \label{subsec_BE_Model}
\vspace{-1.5mm}
\begin{figure}[htbp]
\centering
\includegraphics[width = \linewidth]{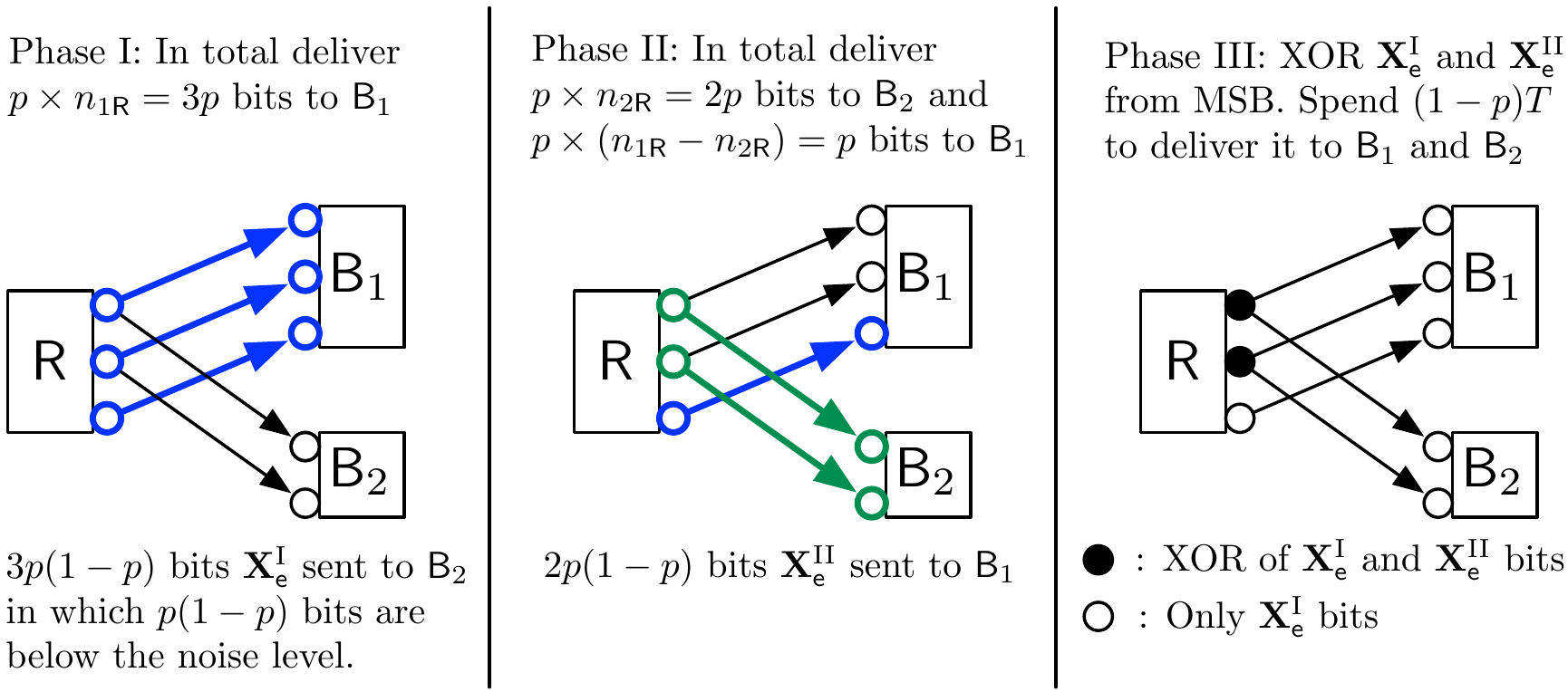}
\caption{Example for achieving corner point
\eqref{eq_BE_BC_UB_point_2} with
$(n_{1\msf{R}},n_{2\msf{R}})=(3,2)$ for the binary expansion
downlink with delayed state information} \label{fig:BE_downlink}
\end{figure}

In this subsection, we employ a binary expansion model
 corresponding to the downlink phase
\eqref{eq_Gau_DL_channel} to obtain insights. In this model, the
transmitted and received signals are binary vectors in
$\mbb{F}_2^{q}$, where $\mbb{F}_2$ denotes the binary field
$\{0,1\}$. The received signals are $ Y_{\msf{B}_i}[t] =
\mb{H}_{\msf{B}_i\msf{R}} S_i[t] X_{\msf{R}}[t],\ i=1,2, $ where
additions are modulo-two component-wise. Channel transfer matrices
are defined as follows: for $i=1,2$, $ \mb{H}_{\msf{B}_i\msf{R}}
:= \mb{S}^{q-n_{i\msf{R}}} $ where $q = \max_{i=1,2}\lbp
n_{i\msf{R}}\rbp$ and $\mb{S}\in\mathbb{F}_2^{q\times q}$ is the
shift matrix defined in \cite{AvestimehrDiggavi_11}. The corner
point corresponds to \eqref{eq_AWGN_BC_UB_point_2_R1} and
\eqref{eq_AWGN_BC_UB_point_2_R2} in this model is
\begin{equation} \label{eq_BE_BC_UB_point_2}
\textstyle (R_1,R_2)=\left(p(n_{1\msf{R}}-n_{2\msf{R}})+
\frac{p(2-p)}{3-p}n_{2\msf{R}}, \frac{p(2-p)}{3-p}
n_{2\msf{R}}\right).
\end{equation}

To achieve this point, the relay uses a three-phase coding scheme
extending that in \cite{GeorgiadisDetDelayedBC}. In Phases I and
II (each with block length $T$), the relay sends bits intended for
$\msf{B}_1$ and $\msf{B}_2$, using the top $n_{1\msf{R}}$ and
$n_{2\msf{R}}$ levels respectively. In addition, in Phase II the
relay also uses the bottom $\lp n_{1\msf{R}}-n_{2\msf{R}}\rp$
levels to deliver additional bits to $\msf{B}_1$. Hence,
$\msf{B}_1$ and $\msf{B}_2$ receive roughly $Tp\lp n_{1\msf{R}} +
n_{1\msf{R}}-n_{2\msf{R}}\rp$ and $Tpn_{2\msf{R}}$ desired bits in
Phase I and II respectively.

In Phase I, there will be roughly $Tp(1-p)n_{1\msf{R}}$ bits which
are erased at $\msf{B}_1$ but erroneously sent to $\msf{B}_2$ can
be used as side-information. We denote this length-$Tp(1-p)$
sequence of $n_{1\msf{R}}$-level binary vector by
$\mb{X}_{\msf{e}}^{\mathrm{I}}$. Note that the bottom $\lp
n_{1\msf{R}}-n_{2\msf{R}}\rp$ levels will lie \emph{below} the
noise level at $\msf{B}_2$ and will NOT appear in this binary
expansion model. Similarly in Phase II, there will be such a
length $Tp(1-p)$ sequence of $n_{2\msf{R}}$-level binary vector
intended for $\msf{B}_2$ but only received by $\msf{B}_1$. We
denote it by $\mb{X}_{\msf{e}}^{\mathrm{II}}$. We aim to
\emph{recycle} these bits in Phase III.

The block length of Phase III is roughly $Tp(1-p)$. In Phase III,
the relay makes use of delayed state information to form
$\mb{X}_{\msf{e}}^{\mathrm{I}}$ and
$\mb{X}_{\msf{e}}^{\mathrm{II}}$. Then it sends out
$\mb{X}_{\msf{e}}^{\mathrm{I}}\oplus
\mb{X}_{\msf{e}}^{\mathrm{II}}$ from the MSB level as depicted on
the rightmost of Figure~\ref{fig:BE_downlink}. Hence the bottom
$\lp n_{1\msf{R}}-n_{2\msf{R}}\rp$ levels consists of bits in the
bottom levels of $\mb{X}_{\msf{e}}^{\mathrm{I}}$ only. With side
information received in Phase I and II, each receiver can decode
the desired bits from the received XORs. In total the numbers of
bits \emph{recycled} in this phase are $Tp(1-p)n_{1\msf{R}}$ and
$Tp(1-p)n_{2\msf{R}}$ at $\msf{B}_1$ and $\msf{B}_2$ respectively.

Putting everything together, we achieve $R_2 =
\frac{p(2-p)}{3-p}n_{2\msf{R}}$ and $R_1 =
\frac{p\lbp(3-p)n_{1\msf{R}}-n_{2\msf{R}}\rbp}{3-p} =
p(n_{1\msf{R}}-n_{2\msf{R}})+\frac{p(2-p)}{3-p}n_{2\msf{R}}$.


\subsection{Proof Sketch of Bounded-gap Achievement to the Corner Point \eqref{eq_AWGN_BC_UB_point_2_R1} \eqref{eq_AWGN_BC_UB_point_2_R2} of the Outer Bound Region}
Extending
to the Gaussian case,
we face the following two challenges. First, in Gaussian channel,
we are sending complex symbols instead of binary bits and there
will be additive Gaussian noise. Second, we need to incorporate
superposition coding into Phase II of Fig.~\ref{fig:BE_downlink},
while $\msf{B}_1$ may not be able to decode and cancel the
higher-layer codeword since the erasure state process at
$\msf{B}_1$ and $\msf{B}_2$ are different. Note that only signals
of $\msf{B}_2$ in Phase II have recycling from Phase III.

We solve the first challenge by resending erased symbols instead
of bits in the third phase. To do this, the relay will
\emph{quantize} the sum sequence formed by the erased symbols,
$X_{\msf{e}}^{\mathrm{I}}+ X_{\msf{e}}^{\mathrm{II}}$, and then
send out the quantization indices. Based on the insight learned in
the binary expansion model, we know that the resolution of
reconstruction must be different: $\msf{B}_1$ requires higher
resolution than $\msf{B}_2$ since $X_{\msf{e}}^{\mathrm{I}}$ goes
deeper in the bit levels. Hence, instead of directly quantizing
into a single quantization index, we employ \emph{successive
refinement} source coding \cite{cheng2005successive} so that
$\msf{B}_1$ is able to get a higher resolution in reconstruction.
Again gaining insights from the binary expansion model, since the
number of layers used by $\msf{B}_i$ is $n_{i\msf{R}}$ for $i=1,2$
in Phase I and II respectively, the MSE of the reconstruction at
$\msf{B}_i$ should be inverse proportional to
$\msf{SNR}_{i\msf{R}}, i=1,2$.

For the second challenge, we aim to solve it using dirty paper
coding (DPC).
However, the conventional DPC requires fully known channel
information $S_1(t)h_{\msf{B}_1\msf{R}}$ at the transmitter
\cite{nested_lattice}. In our case, the current on/off state
$S_1(t)$ is unknown at the relay.
Hence, we propose new one-dimensional (symbol-based) lattice
strategy to solve this problem.

Our scheme is summarized as follows \\
\noindent \textbf{Phase I}: By using random Gaussian codebook,
relay sends coded symbols $X_{\msf{R}}[t]$, $t=1 \ldots T$
    from the codeword representing message for user $\msf{B}_1$.\par
\noindent \textbf{Phase II}: Relay sends
$X_{\msf{R}}[t]=X_{2\msf{R}}[t]+X_{1\msf{R}}[t]$, $t=T+1 \ldots
    2T$, where $X_{2\msf{R}}[t]$ are coded symbols for user $\msf{B}_2$ and
\begin{equation} \label{eq_superposition_x1}
X_{1\msf{R}}[t]=\left(C_{1\msf{R}}[t]-w \cdot
h_{\msf{B}_1\msf{R}}X_{2\msf{R}}[t] - d[t]  \right) \mod \; L
\end{equation}
where similar to \eqref{eq_ul_Lattice}, $C_{1\msf{R}}[t]$ is coded
symbol for user $\msf{B}_1$. $d[t]$ is the independent dither. For
a real number $x, x \mod \;L=x-Q_L(x)$ with $Q_L(x)$ being the
nearest multiple of $L$ to $x$. \par \noindent \textbf{Phase III}:
Let the erased symbols sent to the wrong receiver in Phase 1 and 2
be $X^\mathrm{I}_{\msf{e}}$ and $X^\mathrm{II}_{\msf{e}}$
respectively. Relay first quantizes the length $Tp(1-p)$ sequence
$X^\mathrm{I}_{\msf{e}}+X^\mathrm{II}_{\msf{e}}$ using successive
refinement into indexes $i_\msf{c}$ and $i_\msf{r}$, where
$i_\msf{c}$ is the common index which will be decoded for both
$\msf{B}_1$ and $\msf{B}_2$ while $i_\msf{r}$ is the refinement
index which will be decoded only at $\msf{B}_1$. Gaussian
superposition channel coding with length $T(1-p)$ is adopted to
transmit $(i_\msf{c}$,$i_\msf{r})$.

Now, user $\msf{B}_2$ can know the noisy reconstruction
$X^\mathrm{I}_{\msf{e}}+X^{\mathrm{II}}_{\msf{e}}+Z_{\msf{D}2}$
with MSE $\msf{D}_2$, by decoding $i_\msf{c}$. With proper power
allocation, $\msf{B}_1$ (better channel) knows the reconstruction
$X^\mathrm{I}_{\msf{e}}+X^{\mathrm{II}}_{\msf{e}}+Z_{\msf{D}1}$ by
successively decoding $i_\msf{c}$ and $i_\msf{r}$, where
reconstruction error $Z_{\msf{D}1}$ has smaller MSE $\msf{D}_1$
than that of $Z_{\msf{D}2}$. For this two-receiver source-channel
coding, the rates for common index $i_\msf{c}$ and refinement
index $i_\msf{r}$ are chosen as
\begin{equation}
\textstyle\log\left(1+\frac{2}{\msf{D}_{2}}\right)\label{eq_Main_common_decode}\
\text{and}\
\textstyle\log\left(1+\frac{2}{\msf{D}_{1}}\right) - \log\lp
1+\frac{2}{\msf{D}_{2}}\rp
\end{equation}
respectively. To ensure successful channel decoding at receivers,
we need to carefully choosing the power allocation of the
superposition channel coding, as well as $\msf{D}_1$ and
$\msf{D}_2$ in \eqref{eq_Main_common_decode}. Let the power
allocation for indexes $i_\msf{c}$ and $i_\msf{r}$ be
$\msf{SNR_c}$ and $\msf{SNR_r}$ respectively. We choose
$\msf{SNR_r} =1/\msf{SNR_{2R}},\msf{SNR_{c}=1-SNR_r}$. (Here we
only provide the proof when $\msf{SNR}_{2\msf{R}}\geq 2$, since
the bounded-gap result for $\msf{SNR}_{2\msf{R}} < 2$ is trivial.)
For $\msf{B}_2$ to correctly decode $i_\msf{c}$, from
\eqref{eq_Main_common_decode} and the lengths of channel and
source codes, we need to choose
\begin{equation} \label{eq_Main_SR_D2}
\textstyle \msf{D}_{2}=\frac{4}{\msf{SNR_{2R}}-1}
\end{equation}
For receiver $\msf{B}_1$ to decode both $i_\msf{c}$ and
$i_\msf{r}$, we choose
\begin{equation} \label{eq_Main_SR_D1}
\textstyle \msf{D}_1=\frac{4}{\msf{SNR_{1R}}+\msf{SNR_{2R}}}.
\end{equation}
Note that $\msf{D}_i$ is inverse proportional to
$\msf{SNR}_{i\msf{R}}, i=1,2$, consistent with the insights from
the binary expansion model.

Now receivers $\msf{B}_i, i=1,2$ can obtain reconstructions of
erased symbols $X^\mathrm{I}_{\msf{e}}+X^{\mathrm{II}}_{\msf{e}}$
with $\msf{D_1}$ in \eqref{eq_Main_SR_D1} and $\msf{D_2}$ in
\eqref{eq_Main_SR_D2} respectively. With side-information
$X^\mathrm{I}_{\msf{e}}$ (noisy) from Phase I, $\msf{B}_2$ can
combine $Tp(1-p)$ reconstructed symbols
$X^{\mathrm{II}}_{\msf{e}}+Z_{\msf{D}2}-Z_{\msf{B}_2}$ and the
$Tp$ un-erased symbols received in Phase II to decode XOR
$\Sigma_2$. Then \eqref{eq_AWGN_BC_UB_point_2_R2} is achievable
with bounded gap $\Delta_2$. To see this, we can first upper-bound
the MSE distortion $\msf{D}_2$ in \eqref{eq_Main_SR_D2} as
\begin{equation} \label{leq_Main_SR_D2_ub}
\textstyle \msf{D}_{2} \leq \frac{8}{\msf{SNR_{2R}}}.
\end{equation}
By choosing the power allocation of $X_\msf{1R}$ and $X_\msf{2R}$
in Phase II be $1/\msf{SNR_{2R}}$ and $1-1/\msf{SNR_{2R}}$
respectively, together with the independence of these two signals,
we have the following achievable rate for user $\msf{B}_2$
\begin{align}
R_{2} \geq \frac{1}{(3-p)}\Bigg(&
p(1-p)\C\bigg(\frac{1-\frac{1}{\msf{SNR_{2R}}}}{\frac{8}{\msf{SNR_{2R}}}+\frac{1}{\msf{SNR_{2R}}}+\frac{1}{\msf{SNR_{2R}}}}\bigg)\notag
\\
&+p\;\C\bigg(\frac{1-\frac{1}{\msf{SNR_{2R}}}}{\frac{1}{\msf{SNR_{2R}}}+\frac{1}{\msf{SNR_{2R}}}}\bigg)\Bigg)
\label{eq_Main_R2_general_1}
\end{align}
where \eqref{leq_Main_SR_D2_ub} is applied to obtain the first
term in the RHS of \eqref{eq_Main_R2_general_1}. Then bounded gap
result can be obtained from \eqref{eq_Main_R2_general_1}.

Now we show that for user $\msf{B}_1$, rate $R_1-\Delta_1$, with
$R_1$ in \eqref{eq_AWGN_BC_UB_point_2_R1}, is achievable.
Following similar procedure as $\msf{B}_2$ aforementioned, by
combining the erased symbols with the un-erased symbols received
in Phase I, the following rate is achievable to decode XOR
$\Sigma_1$,
\begin{align}\textstyle
     \frac{p(2-p)}{3-p}\C(\msf{SNR_{1R}})-\frac{p(1-p)}{3-p}\log(3). \label{eq_Main_R1_general}
\end{align}
where the following inequality from \eqref{eq_Main_SR_D1} is used
\[\textstyle
\msf{D}_1 \leq \frac{2}{\msf{SNR_{1R}}}.
\]
Moreover, user $\msf{B}1$ can decode additional messages by
forming the following channel from the un-erased symbols in Phase
II,
\begin{align}
\left[C_{1\msf{R}}[t] + E_L(t) \right] \mod \; L,
\label{eq_Main_mod_A}
\end{align}
where $E_L(t)=(wh_{\msf{B_1R}}-1)X_{1\msf{R}}[t]+wZ_\msf{B_1}[t]$.
The channel \eqref{eq_Main_mod_A} is a modulo-$L$ channel with
length $Tp$ and power $L^2/12=1/\msf{SNR_{2R}}$, then rate
\begin{align} \label{eq_Main_R1_phase2_rate}
\textstyle
\frac{p}{3-p}\left(\log\lp\frac{\msf{SNR_{1R}}}{\msf{SNR_{2R}}}\rp-\log(2
\pi e /12)\right)
\end{align}
is achievable. By summing \eqref{eq_Main_R1_phase2_rate} and
\eqref{eq_Main_R1_general}, our achievable rate for user
$\msf{B}_1$ has bounded gap to \eqref{eq_AWGN_BC_UB_point_2_R1}


\appendix

\subsection{Detailed proof of the Inner Bound $\mscr{R}^{\uplink}_{\G,\Inn}\lp\mathrm{d}\rp$  in
\eqref{eq_Gau_CapIn}}

Here we focus on the bounded-gap achievability to the sum rate in
\eqref{eq_up_delayedub_C3} by joint lattice decoding. From
\eqref{eq_ul_Lattice_model} and \eqref{eq_ul_Lattice}, it can be
easily shown that the post-processed signal
\[
\left( \mathbf{Wy}_{\msf{R}}+
[(\mathbf{d}_{\msf{A}_1}+\mathbf{d}_{\msf{B}_1})^T \;\;
(\mathbf{d}_{\msf{A}_2}+\mathbf{d}_{\msf{B}_2})^T]^T \right) \;
\mathrm{mod} \; (\Lambda_S \times \Lambda_S).
\]
equals to
\begin{align}
\left( \left[(\mathbf{c}_{A_1}+\mathbf{c}_{B_1} )^T\;\;
(\mathbf{c}_{A_2}+\mathbf{c}_{B_2} )^T\right]^T+ \mathbf{E}\right)
\; \mathrm{mod}  \; (\Lambda_S \times \Lambda_S),
\label{eq_uL_eq_channel_1}
\end{align}
where
\begin{equation} \label{eq_uL_eq_E}
\mathbf{E}=\left(\mathbf{W}\mathbf{H}-\mathbf{I}\right)\left[(\mathbf{x}_{\msf{A}_1}+\mathbf{x}_{\msf{B}_1})^T\;
\;
(\mathbf{x}_{\msf{A}_2}+\mathbf{x}_{\msf{B}_2})^T\right]^T+\mathbf{Wz}_\msf{R},
\end{equation}
and $\mathbf{H}=\left[\mathbf{H}_{\msf{A}_1} \;
\mathbf{H}_{\msf{A}_2}\right]$. Moreover, in
\eqref{eq_uL_eq_channel_1}, the sum of user codewords within a
pair $i=1,2$ is still a lattice codeword
\[(\mathbf{c}_{\msf{A}i}+\mathbf{c}_{\msf{B}i}) \; \mathrm{mod} \;
\Lambda_S \in \Lambda_{i}.
\]
Then from \cite{scTWCOM14}, the following sum rate is achievable
by jointly lattice decoding
\begin{equation} \label{eq_uL_eq_sum_rate0}
R_1+R_2 \geq \mathop{\lim}\limits_{T \rightarrow
\infty}\frac{1}{2T} \log
\left(\frac{|\frac{1}{2}\mathbf{I}|}{|\Sigma_E|}\right),
\end{equation}
where $\Sigma_E$ is the covariance matrix of $\mathbf{E}$ in
\eqref{eq_uL_eq_E}. With $\mathbf{W}$ chosen as the MMSE filter,
the information lossless property of MMSE estimation can be
invoked, then the sum rate in \eqref{eq_uL_eq_sum_rate0} becomes
\begin{equation} \label{eq_uL_eq_sum_rate1}
R_1+R_2 \geq \mathop{\lim}\limits_{T \rightarrow
\infty}\frac{1}{2T} \log
\left(\left|\frac{1}{2}\mathbf{I}+\mathbf{H}\mathbf{H}^T\right|\right).
\end{equation}
Now from \eqref{eq_Gau_up_HA}
($\mathbf{H}=\left[\mathbf{H}_{\msf{A}_1} \;
\mathbf{H}_{\msf{A}_2}\right]$) and the ergodicity of state
process,
\begin{align}
R_1+R_2 \geq &\mathrm{E}_{S_1,S_2}\left[\C \lp S_1
\msf{SNR_{R1}}+S_2
\msf{SNR_{R2}}\rp  \right]-1 \notag \\
\geq & \; p(1-p) \lp \C \lp \msf{SNR_{R1}}+ \C \msf{SNR_{R2}}\rp
\rp \notag
\\ &+p^2 \C\lp \msf{SNR_{R1}}+ \msf{SNR_{R2}}\rp -1 \label{eq_uL_eq_sum_rate2}
\end{align}
To compared \eqref{eq_uL_eq_sum_rate2} with
\eqref{eq_up_delayedub_C3}, it can be easily checked that
\begin{align}
&\C\lp \msf{SNR_{R1}}+
\msf{SNR_{R2}+2\sqrt{\msf{SNR_{R1}}\msf{SNR_{R2}}}}\rp\notag
\\ \leq & 1+\C\lp \msf{SNR_{R1}}+ \msf{SNR_{R2}}\rp.
\end{align}
Then the bounded-gap result for sum rate
\eqref{eq_up_delayedub_C3} is established. The bounded-gap
achievement to the RHSs of \eqref{eq_up_delayedub_C1} follows
similarly. As a final note, decoding correct
$(\mathbf{c}_{\msf{A}i}+\mathbf{c}_{\msf{B}i}) \; \mathrm{mod} \;
\Lambda_S$ from aforementioned joint lattice decoding equals to
decode corrects XORs $\Sigma_i=W_{\msf{A}_i} \oplus W_{\msf{B}_i},
i=1,2$ \cite{Nazer}.

\subsection{Detailed proof of the inner bound $\mscr{R}^{\downlink}_{\G,\Inn}\lp\mathrm{d}\rp$  in
\eqref{eq_Gau_CapIn}}

Here we provide the detailed proof for achieving the Gaussian
downlink rate region with delayed state information
$\mscr{R}^{\downlink}_{\G,\Inn}\lp\mathrm{d}\rp$. We assume that
state $(S_1^{t-1},S_2^{t-1})$ are known at both receivers
$\msf{B}_1$ and $\msf{B}_2$ at time $t$. In the following, we only
provide the proof when $\msf{SNR}_{2\msf{R}}\geq 2$, since the
bounded-gap result for $\msf{SNR}_{2\msf{R}} < 2$ is trivial. We
first focus on the proposed three-phase scheme to achieve
$(R_1-\Delta_1,R_2-\Delta_2)$ from
\eqref{eq_AWGN_BC_UB_point_2_R1}\eqref{eq_AWGN_BC_UB_point_2_R2},
where $(\Delta_1,\Delta_2)$ are given in
\eqref{eq_Distortion1}\eqref{eq_Distortion2}. Here we give the
detailed definitions of the erased symbol sequences
$X^\msf{I}_\msf{e}$ and $X^\msf{II}_\msf{e}$ in Phase III. First,
the $X_\msf{R}[t]$, $t=1 \ldots T$ in Phase I forms a codeword
from a random Gaussian codebook to encode $\Sigma_1$ for
$\msf{B}_1$. In Phase III, from the delayed state information, the
relay knows the erased indexes $t$s where state sequence with
length $T_1$ of which $(S_{1}[t],S_2[t])=(0,1)$ in Phase I, $
1\leq t \leq T$. For each $X_\msf{R}[t]$, we assign it to an
unique symbol $X^\msf{I}_\msf{e}[t']=X_\msf{R}[t]$ in the erased
symbol sequence. If $T_1>Tp(1-p)$, then we abandon the last
$T_1-Tp(1-p)$ symbols. On the contrary, if $T_1<Tp(1-p)$, we set
$X_{\msf{e}}^\msf{I}[t']=0, t'=T_1+1, \ldots Tp(1-p)$. Then the
total length of erased symbol sequence $X_{\msf{e}}^\msf{I}[t']$
in Phase I is $Tp(1-p)$. $B_1$ also knows the mapping from $t$ to
$t'$ as $\Pi_\msf{I}$, where $t=\Pi_\msf{I}(t')$ with $t'=1 \ldots
\min\{T_1,Tp(1-p)\}, 1 \leq t \leq T$. The mapping $\Pi_\msf{I}$
is also known at receiver $\msf{B}_1$. In Phase II,
$X_{2\msf{R}}[t]$, $t=T+1, \ldots ,2T$ (independent of
$X_{1\msf{R}}[t]$ in \eqref{eq_superposition_x1}) forms a codeword
from a Gaussian codebook to encode $\Sigma_2$ for $\msf{B}_2$. And
as aforementioned, we can form length $Tp(1-p)$ erased symbol
sequence in Phase II
$X_{\msf{e}}^\msf{II}[\Pi_\msf{II}(t')]=X_{1\msf{R}}[t]+X_{2\msf{R}}[t]$
from the state sequence with length $T_2$ of which
$(S_{1}[t],S_{2}[t])=(1,0)$, $T+1 \leq t \leq 2T$. The
corresponding mapping $\Pi_\msf{II}$ is known at receiver
$\msf{B}_2$.

In Phase III, the relay compresses the sum of erased symbol
sequence $X_{\msf{e}}^\msf{I}[t']+X_{\msf{e}}^\msf{II}[t'], t'=1
\ldots Tp(1-p)$ with successive refinement, and send the
corresponding quantization indexes $i_\msf{c}$ and $i_\msf{r}$
through the on/off downlink. At receiver $\msf{B}_i, i=1,2$, with
successfully decoding the quantization index(es), we wish to
obtain the following reconstruction
\begin{equation} \label{eq_SuccRecon}
X_{\msf{e}}^\msf{I}[t']+X_{\msf{e}}^\msf{II}[t']+Z_{\msf{D}i}[t']
\end{equation}
where $Z_{\msf{D}i} \sim \mcal{CN}\lp 0,\msf{D}_i \rp$ and
$\msf{D}_1 \leq \msf{D}_2$. To solve this two-receiver
source-channel coding problem, we need to carefully select the
lengths and rates of the source and channel codes. The length of
source code is $Tp(1-p)$, and the rate selection of the source
code comes as follows. To validate \eqref{eq_SuccRecon}, we form
the following test channels for successive refinement
\begin{align}
&V_1=X_\msf{sum}+Z; \label{eq_SR_test_B1} \\
&V_2=V_1+Z'=X_\msf{sum}+Z+Z', \label{eq_SR_test_B2}
\end{align}
with source $X_\msf{sum}$ having variance 2 and same distribution
as $X_{\msf{e}}^\msf{I}[t']+X_{\msf{e}}^\msf{II}[t']$;
reconstruction errors $Z \sim \mcal{CN} \lp 0,\msf{D}_1 \rp$ and
$Z' \sim \mcal{CN} \lp 0, \msf{D}_2-\msf{D}_1 \rp$. Here
$X_\msf{sum}, Z$ and $Z'$ are independent. Note that due to the
$\mod L$ operation in \eqref{eq_superposition_x1}, the source
$X_\msf{sum}$ is not Gaussian. We choose the rate for common index
$i_\msf{c}$ from \eqref{eq_SR_test_B2} as
\begin{align}
&\log\left(\frac{2+(\msf{D}_2-\msf{D}_1)+\msf{D}_1}{(\msf{D}_2-\msf{D}_1)+\msf{D}_1}\right)=\log\left(1+\frac{2}{\msf{D}_{2}}\right)\notag
\\ \geq & I(X_\msf{sum};V_{2}), \label{eq_common_decode}
\end{align}
where the inequality comes from that Gaussian source is the
hardest one to quantize \cite{BooK_NIT_KIM}. The rate for
refinement index $i_\msf{r}$ is chosen as
\begin{equation} \label{eq_refine_decode}
\log\!\left(\frac{1+\frac{2}{\msf{D}_{1}}}{1+\frac{2}{\msf{D}_{2}}}\right)
\geq I(X_\msf{sum};V_{1})\!-\!I(X_\msf{sum};V_{2}).
\end{equation}
Note that the RHS is similar to the achievable rate in a Gaussian
wiretap channel. Thus from \cite{leung1978gaussian}, we know that
for any source with variance $2$, the Gaussian source $\mcal{CN}
(0,2)$ maximizes the RHS. Note that $V_{2} \rightarrow V_{1}
\rightarrow X_\msf{sum}$. By choosing $V_1$ as the reconstruction
distribution at $\msf{B}_1$ while $V_2$ as that at $\msf{B}_2$,
from \eqref{eq_SR_test_B1}-\eqref{eq_refine_decode}, the MSE
distortion at $\msf{B}_1$ is $\msf{D}_1$ while that at $\msf{B}_2$
is $\msf{D}_2$ respectively \cite{cheng2005successive}.

 Now we must ensure that the receiver $\msf{B}_1$ can
correctly decode both $i_\msf{c}$ and $i_\msf{r}$ while receiver
$\msf{B}_2$ can correctly decode $i_\msf{c}$. This task is done by
carefully choosing the length and power allocation of the
superposition channel coding, as well as $\msf{D}_1$ and
$\msf{D}_2$ in \eqref{eq_common_decode}\eqref{eq_refine_decode}.
First, since the channel has on/off probability $p$, it is crucial
to choose the length of channel code longer than that of source
code, which is $T(1-p)$ for the length $Tp(1-p)$ source code. Now
indexes $i_\msf{c}$ and $i_\msf{r}$ are channel encoded using
independent Gaussian codebooks with power $\msf{SNR_c}$ and
$\msf{SNR_r}$ respectively, with power allocation $\msf{SNR_r}
=1/\msf{SNR_{2R}},\msf{SNR_{c}=1-SNR_r}$. For $\msf{B}_2$ to
correctly decode $i_\msf{c}$, from \eqref{eq_common_decode} and
the lengths of channel and source codes, we need
\begin{align*}
&Tp(1-p)\log\left(1+\frac{2}{\msf{D}_{2}}\right)
\\ \leq&T(1-p)\cdot
p\;\C\left(\frac{\msf{SNR_{2R}}\left(1-\frac{1}{\msf{SNR_{2R}}}\right)}{\msf{SNR_{2R}}\frac{1}{\msf{SNR_{2R}}}+1}\right),
\end{align*}
which results in
\begin{equation} \label{eq_superposition_R2}
\log\left(1+\frac{2}{\msf{D}_{2}}\right)\leq
\log\left(\frac{1+\msf{SNR_{2R}}}{2}\right).
\end{equation}
Then we can choose
\begin{equation} \label{eq_SR_D2}
\msf{D}_{2}=\frac{4}{\msf{SNR_{2R}}-1}
\end{equation}
For receiver $\msf{B}_1$, first note that since $\msf{SNR_{1R}}
\geq \msf{SNR_{2R}}$, then $\msf{B}_1$ can also successfully
decode common index $i_\msf{c}$ by treating the codeword for
$i_\msf{r}$ as noise. After subtracting the codewords
corresponding to $i_\msf{c}$, from \eqref{eq_refine_decode} and
the lengths of channel and source codes, we need
\begin{equation} \label{eq_superposition_R1}
\log\lp 1+\frac{2}{\msf{D}_{1}} \rp-\log \lp
1+\frac{2}{\msf{D}_{2}} \rp
\leq\C\left(\frac{\msf{SNR_{1R}}}{\msf{SNR_{2R}}}\right)
\end{equation}
to correctly decode refinement index $i_\msf{r}$. From
\eqref{eq_SR_D2}, we must choose $\msf{D}_1$ satisfying
\[
\log \lp 1+\frac{2}{\msf{D}_{1}} \rp\leq
\log\left(\frac{1+\msf{SNR_{2R}}}{2}\right)+\log\left(1+\frac{\msf{SNR_{1R}}}{\msf{SNR_{2R}}}\right),
\]
which is equivalent to
\[
1+\frac{2}{\msf{D}_{1}} \leq \frac{1}{2}
\left(1+\frac{\msf{SNR_{1R}}}{\msf{SNR_{2R}}}+\msf{SNR_{2R}}+\msf{SNR_{1R}}\right).
\]
Because $\msf{SNR_{1R}}/\msf{SNR_{2R}} \geq 1$, the above
inequality can be meet if
\begin{equation} \label{eq_SR_D1}
\msf{D}_1=\frac{4}{\msf{SNR_{1R}}+\msf{SNR_{2R}}}
\end{equation}

Now receivers $\msf{B}_i, i=1,2$ can obtain reconstructions
\eqref{eq_SuccRecon} with $\msf{D_1}$ in \eqref{eq_SR_D1} and
$\msf{D_2}$ in \eqref{eq_SR_D2} respectively, where $\msf{D_1}$ in
\eqref{eq_SR_D2} is smaller than $\msf{D_2}$ in \eqref{eq_SR_D2}.
Note that the (noisy) erased sequence in Phase I
$X_{\msf{e}}^\msf{I}[t']$ is known at $\msf{B}_2$. As a
side-information, $\msf{B}_2$ can subtract
$X_{\msf{e}}^\msf{I}[t']$ from reconstructions
\eqref{eq_SuccRecon} and obtain
$X_{\msf{e}}^\msf{II}[t']+Z_{\msf{D}2}[t']$ (with additional
channel noise). Now $\msf{B}_2$ can combine the erased
$X_{\msf{e}}^\msf{II}[t']+Z_{\msf{D}2}[t']$ with the un-erased
symbols received in Phase II to decode XOR $\Sigma_2$, with
bounded gap to \eqref{eq_AWGN_BC_UB_point_2_R2} as
\begin{align}
R_{2}\geq
\frac{p(2-p)}{3-p}\C(\msf{SNR_{2R}})-\frac{p(1-p)}{3-p}\log(10)-\frac{p}{3-p}.
\label{eq_R2_general}
\end{align}
The details come as follows. First, we can upper-bound the MSE
distortion $\msf{D}_2$ in \eqref{eq_SR_D2} as
\begin{equation} \label{leq_SR_D2_ub}
\msf{D}_{2}=\frac{4}{\msf{SNR_{2R}}-1} \leq
\frac{4}{\frac{1}{2}\msf{SNR_{2R}}}=\frac{8}{\msf{SNR_{2R}}}.
\end{equation}
The above inequality is due to that the SNR regime we considered $
\msf{SNR_{2R}} \geq 2$ is equivalent to $\msf{SNR_{2R}}-1 \geq
\frac{1}{2}\msf{SNR_{2R}}$. And from test channel
\eqref{eq_SR_test_B2}, it is ensured that the quantization noise
$Z_{\msf{D}2}[t']$ is independent of $X_{\msf{e}}^\msf{II}[t']$.
Now we can from the following sequence to decode $\Sigma_2$ at
receiver $\msf{B}_2$
\[
\tilde{Y}_{\msf{B}_2}[t]\!=\!\left\{\!
\begin{array}{ll}
X_\msf{R}[\Pi_\msf{II}^{-1}(t')]+Z_{\msf{D}_2}[\Pi^{-1}_\msf{II}(t')]-\frac{Z_{\msf{B_2}}[t]}{h_\msf{B_2R}} & t= \Pi_\msf{II}(t')  \\
S_2[t]h_\msf{B_2R}X_\msf{R}[t]+Z_\msf{B_2}[t] & \mbox{other} \;t,
\end{array} \right.
\]
where $X_\msf{R}[t]=X_\msf{2R}[t]+X_\msf{1R}[t]$ with
$X_\msf{2R}[t]$ carrying message $\Sigma_2$ and interference
$X_\msf{1R}[t]$ from \eqref{eq_superposition_x1}, $T+1 \leq t \leq
2T$, $t'=1 \ldots \min\{T_1,T_2,Tp(1-p)\}$. Note that when $T
\rightarrow \infty$, $T_1/T,T_2/T \rightarrow p(1-p)$ almost
surely from our random state process. The power allocations of
$X_\msf{1R}$ and $X_\msf{2R}$ are $1/\msf{SNR_{2R}}$ and
$1-1/\msf{SNR_{2R}}$. When $T \rightarrow \infty$, from the
independence of $X_\msf{2R}[t]$ and $X_\msf{1R}[t]$ and our power
allocation, we have the following achievable rate for user
$\msf{B}_2$
\begin{align}
R_{2} \geq
\frac{1}{(3-p)}\Bigg(&p(1-p)\C\bigg(\frac{1-\frac{1}{\msf{SNR_{2R}}}}{\frac{8}{\msf{SNR_{2R}}}+\frac{1}{\msf{SNR_{2R}}}+\frac{1}{\msf{SNR_{2R}}}}\bigg)\notag
\\
&+p\;\C\bigg(\frac{1-\frac{1}{\msf{SNR_{2R}}}}{\frac{1}{\msf{SNR_{2R}}}+\frac{1}{\msf{SNR_{2R}}}}\bigg)\Bigg)
\label{eq_R2_general_1}
\end{align}
where \eqref{eq_R2_general_1} comes from \eqref{leq_SR_D2_ub} and
the fact that Gaussian interference is the worst interference
under the same power constraint $1/\msf{SNR_{2R}}$. Then
\eqref{eq_R2_general} can be easily obtained from
\eqref{eq_R2_general_1}.

Now we show that for user $\msf{B}_1$, bounded-gap rate
$R_1-\Delta_1$ with $R_1$ in \eqref{eq_AWGN_BC_UB_point_2_R1} and
$\Delta_1$ in \eqref{eq_Distortion1}, is achievable. Following
similar procedure as $\msf{B}_2$ aforementioned, by combining the
erased symbols with the un-erased symbols received in Phase I, the
following rate is achievable to decode XOR $\Sigma_1$,
\begin{align}
&\frac{1}{(3-p)}\left(p(1-p)\C\left(\frac{1}{\frac{2}{\msf{SNR_{1R}}}+\frac{1}{\msf{SNR_{1R}}}}\right)+p
\; \C(\msf{SNR_{1R}})\right) \label{eq_R1_general_1} \\ \geq &
     \frac{p(2-p)}{3-p}\C(\msf{SNR_{1R}})-\frac{p(1-p)}{3-p}\log(3). \label{eq_R1_general}
\end{align}
where the following inequality (from \eqref{eq_SR_D1}) is used for
obtaining \eqref{eq_R1_general_1}
\[
\msf{D}_1=\frac{4}{\msf{SNR_{1R}}+\msf{SNR_{2R}}} \leq
\frac{2}{\msf{SNR_{1R}}},
\]
which follows from the assumption $\msf{SNR_{1R}} \geq
\msf{SNR_{2R}}$. Moreover, user $\msf{B}1$ can decode additional
messages by forming the following channel from the un-erased
symbols in Phase II,
\begin{align}
&\left\{ w \left[
h_{\msf{B_1R}}\left(X_{1\msf{R}}[t]+X_{2\msf{R}}[t]\right)+Z_\msf{B_1}[t]\right]+
d(t) \right\} \mod \; L
\notag \\
=&\left[C_{1\msf{R}}[t] + E_L(t) \right] \mod \; L,
\label{eq_mod_A}
\end{align}
where $E_L(t)=(wh_{\msf{B_1R}}-1)X_{1\msf{R}}[t]+wZ_\msf{B_1}[t]$.
The channel \eqref{eq_mod_A} is a modulo-$L$ channel with length
$Tp$ and power $L^2/12=1/\msf{SNR_{2R}}$. By choosing $w$ as the
MMSE coefficient, rate
\begin{align} \label{eq_R1_phase2_rate}
\frac{p}{3-p}\left(\log\lp\frac{\msf{SNR_{1R}}}{\msf{SNR_{2R}}}\rp-\log(2
\pi e /12)\right)
\end{align}
is achievable \cite{nested_lattice}. To compare with
\eqref{eq_AWGN_BC_UB_point_2_R1}, note that the RHS of
\eqref{eq_AWGN_BC_UB_point_2_R1} can be rewritten as
\begin{align}
&\frac{1}{3-p}\left(p\left(\C \lp \msf{SNR_{1R}} \rp -\C \lp
\msf{SNR_{2R}} \rp\right)+p(2-p)\C\lp\msf{SNR_{1R}} \rp \right) \notag \\
 \leq &
\frac{1}{3-p}\left(p\log\left(\frac{\msf{SNR_{1R}}}{\msf{SNR_{2R}}}\right)+
p(2-p)\C\lp\msf{SNR_{1R}} \rp \right),
\end{align}
where the second inequality is due to assumption $\msf{SNR_{1R}}
\geq \msf{SNR_{2R}} $. By summing \eqref{eq_R1_phase2_rate} and
\eqref{eq_R1_general}, our achievable rate for user $\msf{B}_1$
has bounded gap to \eqref{eq_AWGN_BC_UB_point_2_R1} as
\begin{align}
R_1 \geq
&p\left(\C(\msf{SNR}_{1\msf{R}})\!-\!\C(\msf{SNR}_{2\msf{R}})\right)\!+\!
\frac{p(2-p)}{3-p}\C(\msf{SNR}_{2\msf{R}}) \notag
\\&-\frac{p(1-p)}{3-p}\log(3)-\frac{p}{3-p}\log \lp \frac{2\pi e}{12} \rp.
\notag
\end{align}

Finally, the intersection of \eqref{eq_AWGN_BC_UB_R2} and
\eqref{eq_AWGN_BC_UB_R2_1} is
\begin{equation} \label{eq_AWGN_BC_UB_point_1}
(R_1,R_2)=\Big(p\left(\C(\msf{SNR}_{1\msf{R}})\!-\!\C(\msf{SNR}_{2\msf{R}}\right),p\C(\msf{SNR}_{2\msf{R}}\Big).
\end{equation}
For this point, we use Gaussian superposition coding to transmit
messages $\Sigma_1$ and $\Sigma_2$ with corresponding power
allocations $1/\msf{SNR_{2R}}$ and $1-1/\msf{SNR_{2R}}$. Then we
have the following achievable rate pair $(R_1,R_2)$
\[
\Big(p\left(\C(\msf{SNR}_{1\msf{R}})\!-\!\C(\msf{SNR}_{2\msf{R}}\right),p\C(\msf{SNR}_{2\msf{R}})-p
\Big),
\]
which has bounded gap to \eqref{eq_AWGN_BC_UB_point_1}. And it
concludes our proof.

\subsection{Detailed proof of the outer-bound region in
\eqref{eq_Gau_CapOuter}}

Here we prove that with delayed state information, the region
$\mscr{R}^{\downlink}_{\Out}\lp\mathrm{d}\rp$ defined from
\eqref{eq_AWGN_BC_UB_R2_1}\eqref{eq_AWGN_BC_UB_R1}\eqref{eq_AWGN_BC_UB_R2}
is a outer-bound region for the Gaussian model
\eqref{eq_Gau_DL_channel}\eqref{eq_Gau_UP_channel}. We first focus
on \eqref{eq_AWGN_BC_UB_R2}, which results from first forming an
equivalent degraded downlink and then outer-bounding carefully to
avoid explicitly selecting the auxiliary random variable $U$.
Since the noises at receivers in \eqref{eq_Gau_DL_channel} are
independent of the delayed state feedback, we can change
$Y_{\msf{B}_1}$ and $Y_{\msf{B}_2}$ as
\begin{align}
&Y_{\msf{B}_1}=S_{1}X_{\msf{R}}+\frac{Z_{\msf{B}_1}}{h_{\msf{B_1R}}} \label{eq_Y_B1_neq_gain}\\
&Y_{\msf{B}_2}=S_{2}X_{\msf{R}}+\frac{Z_{\msf{B}_2}}{h_{\msf{B_2R}}}=S_{2}X_{R}+\frac{Z_{\msf{B}_1}}{h_{\msf{B_1R}}}+Z'
\label{eq_Y_B2_neq_gain}
\end{align}
where $Z' \sim
\mcal{CN}(0,\frac{1}{\msf{SNR}_{2\msf{R}}}-\frac{1}{\msf{SNR}_{1\msf{R}}})$.
By giving $Y_{\msf{B}_2}$ in \eqref{eq_Y_B2_neq_gain} to
$\msf{B}_1$, we have physically degraded channel
\cite{BooK_NIT_KIM} $X_\msf{R} \rightarrow
\{Y_{\msf{B}_1},Y_{\msf{B}_2}\}\rightarrow Y_{\msf{B}_2}$ with the
following outer bounds
\begin{align}
R_{2}&\leq I(U;Y_{\msf{B}_2}|S) \label{eq_Gau_R2_UB_1} \\
&=I(X_\msf{R},U;Y_{\msf{B}_2}|S_2)-I(X_\msf{R};Y_{\msf{B}_2}|U,S_2) \notag \\ &=I(X_\msf{R};Y_{\msf{B}_2}|S_2)-pr \label{eq_Gau_R2_UB_2} \\
&=p \cdot \C(\msf{SNR_{2R}})-pr,\label{eq_Gau_R2_UB_3}
\end{align}
where $S=\{S_1,S_2\}$ in \eqref{eq_Gau_R2_UB_1}, and
\eqref{eq_Gau_R2_UB_2} comes from conditional Markov Chain $U
\rightarrow X_\msf{R} \rightarrow Y_{\msf{B}_2}$ given $S_2$, and
$r$ is defined as
\begin{equation} \label{eq_Gau_R2_UB_r}
r \triangleq I(X_\msf{R};Y_{\msf{B}_2}|U,S_2=1);
\end{equation}
also
\begin{align}
R_{1}\leq & I(X_\msf{R};Y_{\msf{B}_1},Y_{\msf{B}_2}|S,U) \notag \\
=&I(X_\msf{R};Y_{\msf{B}_1}|S,U)+I(X_\msf{R};Y_{\msf{B}_2}|Y_{\msf{B}_1},S,U) \notag\\
=&pI(X_\msf{R};Y_{\msf{B}_1}|U,S_{1}=1)+p(1-p)I(X_\msf{R};Y_{\msf{B}_2}|U,S_{2}=1)\notag
\\&+p^{2}I(X_\msf{R};Y_{\msf{B}_2}|Y_{\msf{B}_1},U,S_{1}=S_{2}=1). \label{eq_Gau_R1_UB_1}
\end{align}
From \eqref{eq_Y_B1_neq_gain} and \eqref{eq_Y_B2_neq_gain},
\[
I(X_\msf{R};Y_{\msf{B}_2}|Y_{\msf{B}_1},U,S_{1}=S_{2}=1)=0,
\]
then
\begin{align}
R_{1}\leq pI(X_\msf{R};Y_{\msf{B}_1}|U,S_{1}=1)+p(1-p)r
\label{eq_Gau_R1_UB_0}
\end{align}
With $r$ defined in \eqref{eq_Gau_R2_UB_r},
\begin{align}
&I(X_\msf{R};Y_{\msf{B}_1}|U,S_{1}=1)-r \notag \\
=&h\left(X_\msf{R}+\frac{Z_{\msf{B}_1}}{h_{\msf{B_1R}}}\Big|U \right)-h \lp \frac{Z_{\msf{B}_1}}{h_{\msf{B_1R}}} \rp\notag \\ &-h\left(X_\msf{R}+\frac{Z_{\msf{B}_2}}{h_{\msf{B_2R}}}\Big|U\right)+h \lp \frac{Z_{\msf{B}_2}}{h_{\msf{B_2R}}}\rp \label{eq_Gau_R1_UB_maxdiffer} \\
\leq &
\C(\msf{SNR}_{1\msf{R}})-\C(\msf{SNR}_{2\msf{R}}).\label{eq_Gau_R1_UB_maxdiffer1}
\end{align}
Note that the RHS of \eqref{eq_Gau_R1_UB_maxdiffer} is similar to
the achievable rate in a Gaussian wiretap channel. From
\cite{leung1978gaussian} and $|h_{\msf{B_1R}}| \geq
|h_{\msf{B_2R}}|$, \eqref{eq_Gau_R1_UB_maxdiffer1} is valid since
the RHS of \eqref{eq_Gau_R1_UB_maxdiffer} is maximized when
$X_\msf{R}$ is Gaussian conditioned on $U$. Substitute
\eqref{eq_Gau_R1_UB_maxdiffer1} into \eqref{eq_Gau_R1_UB_0}, we
have
\begin{align*}
R_{1}& \leq p\left(r+\C(\msf{SNR}_{1\msf{R}})-\C(\msf{SNR}_{2\msf{R}})\right)+p(1-p)r\\
&=p(2-p)r+p\left(\C(\msf{SNR}_{1\msf{R}})-\C(\msf{SNR}_{2\msf{R}})\right)
\notag
\end{align*}
Together with \eqref{eq_Gau_R2_UB_3}, constraint
\eqref{eq_AWGN_BC_UB_R2} is obtained.

For \eqref{eq_AWGN_BC_UB_R1}, let us give $Y_{\msf{B}_1}$ in
\eqref{eq_Y_B1_neq_gain} to receiver $\msf{B}_2$, we have a
physically degraded channel $X_{\msf{R}} \rightarrow
\{Y_{\msf{B}_1},Y_{\msf{B}_2}\} \rightarrow Y_{\msf{B}_1}$, and
having
\begin{align} \label{eq_Gau_R1_UB_2}
R_{1}&\leq p \; \C(\msf{SNR_{1R}})-pr',
\end{align}
where $r'=I(X_\msf{R};Y_{\msf{B}_1}|U,S_1=1)$. This inequality
follows from steps to reach \eqref{eq_Gau_R2_UB_3}. Also following
steps to reach \eqref{eq_Gau_R1_UB_1}
\begin{align}
R_{2}&\leq
pr'+p^{2}I(X_\msf{R};Y_{\msf{B}_2}|Y_{\msf{B}_1},U,S_1=1,S_{2}=1)\notag
\\ &\;\;\;+p(1-p)(X_\msf{R};Y_{\msf{B}_2}|U,S_{2}=1,S_{1}=0)
\notag\\  & \leq pr'+p^{2}\cdot 0+p(1-p)r' \label{eq_Gau_R2_UB_4}
\\  &=pr'(2-p), \label{eq_Gau_R2_UB_5}
\end{align}
where \eqref{eq_Gau_R2_UB_4} comes from the data processing
inequality,
\begin{align}
I(X_\msf{R};Y_{\msf{B}_2}|U,S_{2}=1)\leq
I(X_\msf{R};Y_{\msf{B}_1}|U,S_{1}=1)=r',
\end{align}
since given $U$ and $S_1=S_2=1$, we have the Markov chain
$X_\msf{R} \rightarrow Y_{\msf{B}_1} \rightarrow Y_{\msf{B}_2}$
from \eqref{eq_Y_B1_neq_gain} and \eqref{eq_Y_B2_neq_gain}. From
\eqref{eq_Gau_R1_UB_2} and \eqref{eq_Gau_R2_UB_5}, we have
\eqref{eq_AWGN_BC_UB_R1}, and then
$\mscr{R}^{\downlink}_{\Out}\lp\mathrm{d}\rp$ is a capacity outer
bound region. Note that for the binary expansion model in Sec.
\ref{subsec_BE_Model}, by giving $Y_{\msf{B}_2}$ to $\msf{B}_1$,
we get
\begin{align}
\textstyle\frac{R_{1}}{p(2-p)}+\frac{R_{2}}{p}\leq
\textstyle\frac{1}{(2-p)}(n_{1\msf{R}}-n_{2\msf{R}})+n_{2\msf{R}}.
\label{eq_BE_BC_UB_R2}
\end{align}
from aforementioned degraded channel arguments. By reversing the
role of $\msf{B}_1$ and $\msf{B}_2$, one get
\begin{align}
\textstyle\frac{R_{1}}{p}+\frac{R_{2}}{p(2-p)} \leq n_{1\msf{R}},
\label{eq_BE_BC_UB_R1}
\end{align}
And the corner point \eqref{eq_BE_BC_UB_point_2} comes from the
intersection of \eqref{eq_BE_BC_UB_R2} and \eqref{eq_BE_BC_UB_R1}.

The outer-bound region $\mscr{R}^{\uplink}_{\Out}\lp\mathrm{d}\rp$
can be proved by allowing users $\msf{A}_1$ and $\msf{A}_2$
cooperate, which is akin to a two transmitter-antennas MISO
channel with per antenna power constraint. This concludes our
proofs for outer bounds of capacity region with delayed state
information $\mscr{C}(\mathrm{d},\mathrm{d})$ in
\eqref{eq_Gau_CapOuter}.

\subsection{Proof for the rest three regions in Theorem \ref{Theo_Main}}
Now we turn to the bounded-gap result for the capacity region with
instantaneous state information for all terminal users and the
relay $\mscr{C}_{\G}(\mathrm{i},\mathrm{i})$. For
\eqref{eq_Gau_CapIn}, to achieve
$\mscr{R}^{\uplink}_{\G,\Inn}\lp\mathrm{i}\rp$ for the uplink
phase, the operations of users and the relay are similar to those
for achieving $\mscr{R}^{\uplink}_{\G,\Inn}\lp\mathrm{d}\rp$ with
delayed state information in Sec \ref{sec:PfInner}. The only
difference is that one can perform on/off power allocation on
\eqref{eq_ul_Lattice} with instantaneous state information. In the
downlink phase, the region
$\mscr{R}^{\downlink}_{\G,\Inn}\lp\mathrm{i}\rp$ is fully
achievable by Gaussian superposition coding with on/off power
allocation. As for the outer-bound regions in
\eqref{eq_Gau_CapOuter}, the cooperative outer bounds for the
uplink with on/off power allocation result in
$\mscr{R}^{\uplink}_{\G,\Out}\lp\mathrm{i}\rp$. From the
uplink-downlink duality (sum power constraint),
$\mscr{R}^{\downlink}_{\G,\Out}\lp\mathrm{i}\rp$ also a capacity
outer-bound region.

Note that in our aforementioned proofs for
$\mscr{C}_{\G}(\mathrm{d},\mathrm{d})$ and
$\mscr{C}_{\G}(\mathrm{i},\mathrm{i})$, the capacity outer and
inner bounds can be decomposed to those for uplink and downlink.
Then these proofs also apply to proving
$\mscr{C}_{\G}(\mathrm{i},\mathrm{d})$ and
$\mscr{C}_{\G}(\mathrm{d},\mathrm{i})$, which establishes the
bounded-gap results for all four $(\mathrm{u},\mathrm{r})$
combinations in Theorem \ref{Theo_Main}.

\bibliographystyle{IEEEtran}
\bibliography{../bibTWR}

\end{document}